\newtheorem{THEOREM}{Theorem}
\newtheorem{theorem}{Theorem}[section]
\newtheorem{claim}[theorem]{Claim}
\newtheorem{corollary}[theorem]{Corollary}
\newtheorem{definition}[theorem]{Definition}
\newtheorem{lemma}[theorem]{Lemma}
\newtheorem{observation}[theorem]{Observation}
\newtheorem{fact}[theorem]{Fact}
\newtheorem{problem}[theorem]{Problem}
\newtheorem{proposition}[theorem]{Proposition}
\newcommand{\set}[2][]{{\left\{#2\right\}}^{#1}}
\newcommand{\bools}[1]{\set[#1]{0,1}}
\newcommand{\abs}[1]{\left|{#1}\right|}
\newcommand{\size}[1]{\abs{#1}}
\newcommand{\condset}[2]{\set{#1 \; \left| \;#2 \right. }}
\newcommand{\restrict}[1]{| _{#1}}
\newcommand{\N}{\mathbb{N}}
\newcommand{\F}{\mathbb{F}}
\newcommand{\Ext}{\mathbb{E}}
\newcommand{\Fn}{\mathbb{F}[x_{1},x_{2},\ldots ,x_{n} ]}
\newcommand{\Fnxy}{\mathbb{F}[x_{1},x_{2},\ldots ,x_{n},y_{1},y_{2},\ldots ,y_{n} ]}
\newcommand{\BigO}{\mathcal{O}}
\newcommand{\eqdef}{\stackrel{\Delta}{=}}
\newcommand{\nequiv}{\not \equiv}
\newcommand{\var}{\mathrm{var}}
\newcommand{\poly}{\mathrm{poly}}
\newcommand{\Prop}{\mathcal{P}}
\newcommand{\xb}{\bar{x}}
\newcommand{\yb}{\bar{y}}
\newcommand{\ab}{\bar{a}}
\newcommand{\bb}{\bar{b}}
\newcommand{\cb}{\bar{c}}
\newcommand{\comm}[2]{\Delta  _{ {#1} {#2} }}
\newcommand{\commML}[3]{
{#1} \restrict{{#2}=1, {#3}=1} \cdot {#1} \restrict{{#2}=0, {#3}=0} -
{#1} \restrict{{#2}=1, {#3}=0} \cdot {#1} \restrict{{#2}=0, {#3}=1}}
\newcommand{\Spar}[3]{\frac{\partial ^{2}{#1}}{\partial x_{#2}\partial x_{#3}}}
\newcommand{\Fpar}[2]{\frac{\partial {#1}}{\partial {x_#2}}}
\newcommand{\B}[2]{B_{#1}^{#2} }
\newcommand{\dec}[3]{$(x_{#1}, x_{#2})$-decomposable mod ${#3}$}
\newcommand{\lROP}{$\bar{a}$-three-locally read-once }
\newcommand{\ignore}[1]{}
\begin{document}

\title{Characterizing Arithmetic Read-Once Formulae}

 \date{}

\author{ Ilya Volkovich \thanks{Computer Science Department and Center for Computational Intractability,
Princeton University, Princeton NJ. 
Email: {\tt ilyav@cs.princeton.edu}.
Research partially supported by NSF Award CCF 0832797.}}

\maketitle

\begin{abstract}
An \emph{arithmetic read-once formula} (ROF for short) is a
formula (i.e. a tree of computation) in which the
operations are $\{+,\times\}$ and such that every input variable
labels at most one leaf. 
We give a simple characterization of such formulae.
Other than being interesting in its own right, our characterization gives rise to a property testing algorithm for
functions computable by such formulae. 
To the best of our knowledge, prior to our work no characterization and/or property testing algorithm was known
for this kind of formulae.
\end{abstract}

\thispagestyle{empty}

\pagenumbering{arabic}

\section{Introduction}
\label{sec:Intro}

Read-once formulae (ROF) are formulae in which each variable appears at most once.
Those are the smallest possible functions that depend on all of their variables.
Although they form a very restricted model of
computation, they received a lot of attention in both the Boolean 
\cite{KLNSW93,AngluinHK93,BshoutyHH95b} and the algebraic 
\cite{HancockH91,BshoutyHH95,BshoutyBshouty98,BshoutyCleve98, ShpilkaVolkovich09, ShpilkaVolkovich14}
worlds. A polynomial $P(\bar{x})$ is a \emph{read-once polynomial} (ROP for short) 
if it can be computed by an arithmetic read-once formula.
It is not hard to see that read-once polynomials form a proper subclass of multilinear polynomials%
\footnote{A multilinear polynomial is a polynomial in which each the individual degree of each variable is at most $1$.}.

In \cite{Gurvich77} and then again in \cite{KLNSW93}, a characterization of functions
computed by Boolean read-once formulae was given. Those were referred to as ``read-once functions''.
In this work we give a characterization of functions computable by arithmetic read-once formulae.
That is, read-once polynomials.
More specifically, we prove that a polynomial $P(\bar{x})$ is a read-once polynomial if and only if
all its restrictions to three variables are read-once polynomials,
thus showing that the structural property holds globally if and only if it holds locally. 

Our structural results require that the underlying field is of polynomial size.
In case that $\size{\F}$ is too small we view the given polynomials 
as polynomials over an extension field $\Ext$ of an appropriate size.  
This is common to many structural results for polynomials (e.g. \cite{FGLSS91,AroraSudan03}).
For more details, see Section \ref{sec:lower} and discussions in \cite{KlivansSpielman01, SY10} and the references within.

Let $P \in \Fn$ be an $n$-variate polynomial over $\F$. Given an assignment $\bar{a} \in \F^n$ we say that a polynomial $P(\bar{x})$ is \emph{\lROP}\!if 
for any choice of three variables, setting the remaining variables in $P$ to $\bar{a}$ results in a read-once polynomial.
Formally, for every subset $I \subseteq [n]$ of size $\size{I}=3$ setting $x_i = a_i$ in $P$ for every $i \in [n] \setminus I$ 
results in a read-once polynomial.
Note that in terms of the restriction size our results are tight since
every bivariate multilinear polynomial is a read-once polynomial.
In other words, every multilinear polynomial  is $\bar{a}$-two-locally read-once for every $\ab \in \F^n.$
We can now give our main theorem.

\begin{THEOREM}
\label{THM:Main1}
Let $n \geq 1$ and $\F$ be a field of size $\size{\F} \geq 1.5n^3$. 
Let $P \in \Fn$ be a multilinear polynomial over $\F$. 
Then $P$ is a read-once polynomial if and only if $P$ is \lROP for each $\bar{a} \in \F^n$.
\end{THEOREM}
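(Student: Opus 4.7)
The forward direction ($\Rightarrow$) is immediate: if $P$ is computed by a read-once formula $\Phi$, then for any triple $I \subseteq [n]$ of size $3$ and any $\bar{a} \in \F^n$, substituting $x_i = a_i$ at the unique leaf of each variable $i \in [n]\setminus I$ and simplifying yields a read-once formula for the restricted polynomial. So the real work is in the backward direction, which I would prove by induction on $n$. The base cases $n \leq 3$ are essentially trivial: every multilinear polynomial in at most two variables is read-once (as observed just before the theorem), and for $n = 3$ the hypothesis applied with $I = [3]$ restricts no variables, so $P$ itself is read-once.

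For the inductive step the plan is to reconstruct the top-level decomposition of $P$. A read-once polynomial on at least two variables admits a top-level splitting $P = Q_1 \circ Q_2 \circ \cdots \circ Q_t$ with $\circ \in \{+,\times\}$, with $\var(Q_1),\ldots,\var(Q_t)$ pairwise disjoint and nonempty, and with the induced partition of $[n]$ determined by $P$. Concretely, I would: (i) use the field-size hypothesis $\size{\F} \geq 1.5 n^3$ together with a Schwartz--Zippel argument on a witness polynomial of degree $O(n^3)$ to fix a ``justifying'' assignment $\bar{a}^* \in \F^n$ at which $P$, every partial derivative $\partial P/\partial x_i$, and all other relevant auxiliary polynomials (coming from potential top-level factors or summands) are nonzero; (ii) for every pair of indices $i \neq j$, inspect the three-variable restriction of $P$ at $\bar{a}^*$ obtained by retaining $x_i, x_j$ and a carefully chosen third variable $x_k$, and read off from the canonical tree of this $3$-variable ROP whether $x_i, x_j$ lie in a common top-level block of $P$ and, if not, which operation $\{+,\times\}$ sits at the root separating them; (iii) argue that these pairwise verdicts are mutually consistent and therefore define a well-posed partition of $[n]$ and a candidate top-level operation, yielding a candidate decomposition; (iv) verify that this candidate truly equals $P$, by extracting each $Q_r$ via suitable restrictions of $P$ and checking that $Q_1 \circ \cdots \circ Q_t$ and $P$ agree on sufficiently many inputs (using multilinearity to upgrade equality on a hitting set to equality as polynomials); and (v) apply the inductive hypothesis to each $Q_r$, noting that the \lROP property is hereditary under restriction and hence under passing to the subpolynomials $Q_r$.

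The main obstacle is steps (ii)--(iii): extracting unambiguous global information from three-variable snapshots. The danger is that at a poorly chosen $\bar{a}^*$ a restricted subtree of $P$ could evaluate to zero or collapse to a constant, so that two different triples $\{i,j,k\}$ and $\{i,j,k'\}$ record mutually inconsistent verdicts about the pair $(i,j)$. Ruling out such degeneracies is exactly what the justifying property of $\bar{a}^*$ buys us, and the bound $\size{\F} \geq 1.5 n^3$ is precisely what guarantees its existence through Schwartz--Zippel on the (polynomial-size-degree) product of all non-vanishing conditions. Once the non-degeneracy is in force the local verdicts align into a coherent global partition, and the remainder of the argument---extracting the $Q_r$'s, verifying the decomposition, and recursing on the smaller pieces---is routine.
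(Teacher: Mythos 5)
Your forward direction and the overall scaffolding (pick a ``generic'' assignment via a Schwartz--Zippel/degree count, look only at three-variable restrictions there, then induct) match the paper's high-level plan, but there is a genuine gap at exactly the point you flag as the ``main obstacle,'' and it is not repaired by the non-degeneracy you describe. The hard direction must handle the case that $P$ is \emph{not} a read-once polynomial, and there the objects you propose to make non-vanishing --- ``auxiliary polynomials coming from potential top-level factors or summands'' --- do not exist: you cannot define a witness polynomial from a top-level decomposition of $P$ whose existence is precisely what is in question, and likewise step (ii), which reads verdicts off ``the canonical tree'' of a $3$-variable snapshot of $P$, tacitly presupposes read-once structure on $P$. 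The real difficulty is the converse lifting statement: setting variables in a \textbf{non-separable} polynomial can produce a separable one, so you must exhibit a certificate, defined intrinsically from $P$, whose non-vanishing at $\bar{a}^*$ guarantees that separability observed in the restrictions forces separability of $P$ itself. Requiring only that $P$ and its first partial derivatives be nonzero at $\bar{a}^*$ is far from sufficient for this, and your outline contains no mechanism that plays this role; in particular nothing in it rules out the ``false positive'' in which every three-variable restriction at $\bar{a}^*$ (indeed at every $\bar a$) is a ROP while $P$ is read-many, which is the entire content of the theorem.

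The paper's main technical contribution is exactly this missing mechanism: the commutator $\comm{i}{j}P$ together with the mapping $\B{i,j}{}(P)$ (a $2\times 2$ determinant testing proportionality of $\comm{i}{j}P$ and $\Spar{P}{i}{j}$ on two disjoint copies of the variables), whose restricted versions $\B{i,j}{J}(P)$ enter the product $\Phi(P)$ alongside the first and \emph{second} partial derivatives. The key facts (Lemma \ref{lem:Bij prop}, Proposition \ref{prop:B}, Lemma \ref{lem:mod}) show that if two single-variable restrictions of $P$ at a non-zero of $\Phi(P)$ are \dec{i}{j}{\F}, with matching constants, then $P$ itself is \dec{i}{j}{\F}; combined with a gate-graph connectivity argument this yields Lemma \ref{lem:main}, and the induction is then run by fixing one variable at a time (Theorem \ref{thm:Main Tech}), rather than by decomposing into blocks $Q_r$ and recursing as you suggest. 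Your consistency-of-pairwise-verdicts and verification steps (iii)--(iv) would have to be replaced by, or re-derived from, such a lifting certificate; as written they are assertions rather than proofs, so the argument does not go through.
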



While establishing a structural result, iterating over all the assignments in $\F^n$ is a costly computational task. 
In order to get better algorithmic performance, we establish a more parametric version of the theorem.

\begin{THEOREM}
\label{THM:Main2}
Let $n \geq 1,\varepsilon > 0$ and $\F$ be a field of size $\size{\F} > 1.5n^4 / \varepsilon$. 
Let $P \in \Fn$ be a multilinear polynomial over $\F$.
Then $P$ is a read-once polynomial if and only if $P$ is \lROP for at least $\varepsilon$ fraction of $\bar{a} \in \F^n$.
\end{THEOREM}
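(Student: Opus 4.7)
The plan is to reduce to Theorem~\ref{THM:Main1} via a Schwartz--Zippel density argument. The forward direction is trivial: if $P$ is a read-once polynomial then any restriction of $P$ is again read-once, so $P$ is \lROP for every $\bar{a} \in \F^n$, and in particular for at least an $\varepsilon$ fraction. For the converse I argue by contrapositive: assume that $P$ is \emph{not} read-once, and show that the set of \lROP assignments has density strictly less than $\varepsilon$.

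Since $\size{\F} > 1.5 n^4/\varepsilon \geq 1.5 n^3$ (for any $\varepsilon \leq 1$), Theorem~\ref{THM:Main1} applies and produces a witness $\bar{a}^* \in \F^n$ together with a fixed triple $I^* = \{i,j,k\} \subseteq [n]$ such that $Q^* := P\restrict{\bar{x}_{[n] \setminus I^*} = \bar{a}^*_{[n] \setminus I^*}}$ is not a read-once polynomial. Every \lROP assignment $\bar{a}$ must in particular yield a read-once restriction for this fixed triple $I^*$, so it suffices to bound the density of
\[
U := \condset{\bar{a} \in \F^n}{P\restrict{\bar{x}_{[n]\setminus I^*} = \bar{a}_{[n]\setminus I^*}} \text{ is a read-once polynomial}}.
\]

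The key step is to realize $U$ as a finite union of algebraic varieties in $\F^n$. A three-variable multilinear polynomial on $\{x_i,x_j,x_k\}$ is read-once if and only if, for some choice of a ``distinguished'' variable, it admits either (i) an additive decomposition $f(x_i) + g(x_j,x_k)$, cut out by the vanishing of the three coefficients of $x_i x_j$, $x_i x_k$ and $x_i x_j x_k$ (linear in the coefficients); or (ii) a multiplicative decomposition $h(x_i) \cdot g(x_j,x_k)$, cut out by the rank-one condition that all $2 \times 2$ minors of the $2 \times 4$ matrix of coefficients (with rows indexed by the substitutions $x_i=0$ and $x_i=1$) vanish (bilinear in the coefficients). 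Degenerate cases in which the polynomial depends on fewer than three variables are special instances of these shapes. Because each coefficient of the trivariate restriction is itself a multilinear polynomial in $\bar{a}$ of degree at most $n-3$, pulling these shape-defining equations back through the coefficients describes $U$ as a union of at most a constant number of varieties in $\F^n$, each cut out by polynomials in $\bar{a}$ of degree $O(n)$.

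Because $\bar{a}^* \notin U$, for every such shape at least one of its defining polynomials is not identically zero in $\bar{a}$, so by Schwartz--Zippel the zero locus of each such polynomial has density at most $O(n)/\size{\F}$ in $\F^n$. Summing over the boundedly many shapes yields $|U|/\size{\F}^n \leq O(n)/\size{\F} < \varepsilon$ by the hypothesis on $\size{\F}$, completing the proof. I expect the principal subtlety to lie not in the Schwartz--Zippel step itself but in the careful enumeration of read-once shapes for a trivariate multilinear polynomial and in the sharp bookkeeping of the degrees of their defining conditions: the multiplicative rank-one condition is bilinear in the coefficients of the restriction, and it is this that produces the $O(n)$ dependence on $n$ in the final density bound.
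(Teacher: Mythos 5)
Your route is genuinely different from the paper's, and mostly sound. The paper proves Theorem \ref{THM:Main2} in one line on top of its main technical theorem (Theorem \ref{thm:Main Tech}): it views $\Phi(P)(\xb,\yb)$ of Definition \ref{def:Phi} as a polynomial in $\xb$ over $\F(y_1,\ldots,y_n)$, applies Schwartz--Zippel to conclude that fewer than an $\varepsilon$ fraction of assignments $\ab$ satisfy $\Phi(P)(\ab,\yb)\equiv 0$, and hence some assignment witnessing the hypothesis is a non-zero of $\Phi(P)$, to which Theorem \ref{thm:Main Tech} applies. You instead reduce to Theorem \ref{THM:Main1} (legitimate and non-circular, since that theorem is proved independently), fix a single witness triple $I^*$, and bound the density of the set $U$ of assignments whose restriction to $I^*$ is read-once by exhibiting $U$ as a union of boundedly many varieties whose defining polynomials have degree $O(n)$ in $\ab$; the certificate trick (use $\ab^*\notin U$ to extract, for each piece, one defining polynomial that is not identically zero, then Schwartz--Zippel and a union bound) is correct and fits comfortably under $\size{\F} > 1.5n^4/\varepsilon$.

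There is, however, a concrete gap in the load-bearing step: your claimed ``if and only if'' enumeration of trivariate read-once shapes is false, because the multiplicative shape omits the additive constant. A trivariate ROP of the form $h(x_i)\cdot g(x_j,x_k)+c$ with $c\neq 0$ need satisfy neither your rank-one condition nor any additive condition. For instance, $P=(x_1+1)(x_2+x_3)+1$ is read-once, yet for every choice of distinguished variable one of the mixed coefficients you require to vanish equals $1$, and the two rows $P\restrict{x_i=0}$, $P\restrict{x_i=1}$ of your $2\times 4$ coefficient matrix are never proportional. Consequently $U$ is \emph{not} contained in the union of your shape varieties, and bounding the density of that union does not bound the density of $U$; the contrapositive collapses exactly there. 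The fix is readily available in the paper: by Lemma \ref{lem:3ROP}, a trivariate multilinear polynomial is a ROP iff at least two of $\B{1,2}{}(P)$, $\B{1,3}{}(P)$, $\B{2,3}{}(P)$ vanish identically (equivalently, one eliminates the unknown constant $c$ via the commutator, Lemma \ref{lem:commutator}). This is a union of three varieties cut out by polynomials of constant degree in the eight coefficients of the restriction, hence of degree $O(n)$ in $\ab$ after pulling back, and with this replacement the remainder of your argument goes through verbatim.
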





We present two applications of results.
The first application is a property tester for read-once polynomials.
The construction uses our characterization and a result from \cite{FGLSS91}.
This is the first property tester for this class of polynomials. 


\begin{THEOREM}
\label{THM:Main3}
Let $n \geq 1,\delta > 0$ and $\F$ be a field of size $\size{\F} = \Omega(\frac{n^5}{\delta})$.
There exists a non-adaptive algorithm that
given oracle access to a function $f : \F^n \to \F$ runs in time $\poly(n,1/\delta)$ and outputs ``yes'' if $f$ represents a read-once polynomial.
Otherwise, if $f$ is $\delta$-far from any read-once polynomial, the algorithm outputs ``no'' with probability at least $3/4$.
\end{THEOREM}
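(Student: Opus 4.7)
My plan is to combine the structural characterization of Theorem~\ref{THM:Main2} with a standard multilinearity self-tester and self-corrector, such as the one from \cite{FGLSS91}. The algorithm proceeds in two phases: first confirm that $f$ is close to some multilinear polynomial $P$, then use local tests on random restrictions to certify that $P$ is itself a read-once polynomial.

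In the first phase I would run the multilinearity self-tester from \cite{FGLSS91} on $f$ with proximity parameter $\delta/2$. If $f$ is $\delta/2$-far from every multilinear polynomial this phase rejects with probability $\geq 3/4$. Otherwise, one obtains a self-corrector $\tilde P$ such that on any input $\bar b\in\F^n$ one has $\tilde P(\bar b)=P(\bar b)$ with probability $\geq 1-\eta$, where $P$ is the unique multilinear polynomial at distance $\leq \delta/2$ from $f$. By standard amplification (repeat and majority-vote) we can take $\eta=1/\poly(n,1/\delta)$ so that a union bound guarantees correctness of all subsequent queries.

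In the second phase I would instantiate Theorem~\ref{THM:Main2} with $\varepsilon=\Theta(\delta/n)$; this matches its field-size requirement $\size{\F}\geq 1.5 n^4/\varepsilon$ with our hypothesis $\size{\F}=\Omega(n^5/\delta)$. Sample a single uniformly random assignment $\bar a\in\F^n$, and for every triple $I\subseteq[n]$ of size $3$ query $\tilde P$ at the $8$ points obtained by letting the variables in $I$ range over $\{0,1\}^3$ while fixing the remaining variables to $\bar a_{[n]\setminus I}$. From these $8$ values I reconstruct the trivariate multilinear restriction of $P$, and check whether it is a read-once polynomial by a constant-time algebraic test (a trivariate multilinear polynomial has $8$ coefficients, and being a ROP is a simple closed condition on them, e.g.~the fact that its second-order partial derivatives all factor nicely). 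Output ``yes'' iff all $\binom{n}{3}$ restrictions pass the check.

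For correctness, if $f$ is itself a ROP then $f=P$ is multilinear and every trivariate restriction is a ROP, so both phases accept with high probability. Conversely, if $f$ is $\delta$-far from every ROP, then either Phase~1 rejects, or $f$ is within distance $\delta/2$ of some multilinear $P$ which cannot itself be a ROP (else $f$ would be $\delta/2$-close to a ROP). By Theorem~\ref{THM:Main2} at most an $\varepsilon$-fraction of the $\bar a$'s are \lROP for such $P$, so the sampled $\bar a$ yields a failing triple with probability $\geq 1-\varepsilon\geq 3/4$. The total query complexity is $O(n^3)$ times the per-query overhead of the self-corrector, hence $\poly(n,1/\delta)$. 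The main obstacle is the error analysis for the self-corrector: each of the $O(n^3)$ queries must return $P(\bar b)$ exactly, so one must amplify the per-query failure probability of $\tilde P$ below $1/n^{3+c}$ while keeping the overall running time polynomial; a secondary subtlety is selecting $\varepsilon$ so that the field-size requirement of Theorem~\ref{THM:Main2} is compatible with the stated $\size{\F}=\Omega(n^5/\delta)$, which is exactly what motivates the choice $\varepsilon=\Theta(\delta/n)$.
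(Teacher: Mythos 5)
Your overall architecture is legitimate but genuinely different from the paper's. The paper's Algorithm \ref{alg:property ROT} never self-corrects: it picks three random points $\ab,\bb,\cb$, interpolates $f$ directly on the $3\times3\times3$ grids $S_I$ (checking multilinearity of the interpolant along the way), and the analysis in Lemma \ref{lem:property ROT} splits on the threshold $\delta'=\min\set{\delta/2,\,1/n^4}$. If $f$ is $\delta'$-far from multilinear, the grid queries already contain enough random aligned triples for Lemma \ref{lem:ML test}; otherwise $\Delta(f,P)\le 1/n^4$ for some multilinear $P$, so a union bound over the $\BigO(n^3)$ grid points (each individually uniform in $\F^n$) shows the algorithm effectively sees $P$ itself, and Theorem \ref{THM:Main2} finishes as in your Phase 2. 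The per-run rejection probability is then only about $\delta+1/n^4$, and Theorem \ref{THM:Main3} follows by $\BigO(1/(\delta+1/n^4))$ repetitions. You instead keep the proximity parameter at $\delta/2$ and pay for it with a self-corrector, buying constant rejection probability in a single (more query-heavy) run. Your use of Theorem \ref{THM:Main2} is correct (though $\varepsilon=\Theta(\delta/n)$ is unnecessarily aggressive: $\varepsilon=1/8$ already suffices, since $\size{\F}=\Omega(n^5/\delta)\ge\Omega(n^4)$ because $\delta\le 1$), and the ``constant-time algebraic test'' on the trivariate restriction should simply be Lemma \ref{lem:3ROP}; your parenthetical about second-order partial derivatives factoring nicely is not by itself a criterion.

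The one step that is not established is precisely the one you flag: the self-corrector. Lemma \ref{lem:ML test} (the FGLSS aligned-triple bound, which is all the paper imports) is a \emph{tester}, not a corrector, so you must supply separate machinery: correcting a total-degree-$n$ polynomial at relative distance $\delta/2$ requires Reed--Solomon decoding along random lines (Berlekamp--Welch / GLRSW-style), it only works when $\delta/2$ is bounded away from the unique-decoding radius -- so you must first cap $\delta$ at a fixed constant (harmless, but it must be said, since for $\delta$ near $1$ the closest multilinear polynomial at distance $\delta/2$ need not even be unique) -- and pushing the per-query failure below $1/n^{3+c}$ before the union bound over your $8\binom{n}{3}$ queries needs a concentration argument (points on a random line through a fixed point are only pairwise independent) or repetition-plus-majority, which in turn needs the base success probability above $1/2$. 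None of this is deep, but as written it is asserted rather than proved, and it is exactly what the paper's $\min\set{\delta/2,\,1/n^4}$ device is designed to sidestep. You should also state explicitly that all of your queries (Phase 1 triples, Phase 2 line/cube points) are determined by the randomness alone and do not depend on earlier answers, since the theorem demands a non-adaptive tester.
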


Next, we devise algorithm for the problem of 
\emph{read-once testing}.
That is given oracle access to a polynomial $P$ decide if $P$ is a read-once 
polynomial. This problem was introduced and studied in \cite{ShpilkaVolkovich14}.

\begin{THEOREM}
\label{THM:Main4}
Let $n,d \geq 1$ and $\F$ be a field of size $\size{\F} = \Omega(n^4+d)$. 
There exists a non-adaptive algorithm that given oracle access to a polynomial $P \in \Fn$ of degree at most $d$
runs in time $\poly(n,d)$ and outputs ``yes'' if $P$ is a read-once polynomial.
Otherwise, the algorithm outputs ``no'' with probability at least $3/4$.
\end{THEOREM}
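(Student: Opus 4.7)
The plan is to apply Theorem \ref{THM:Main2} with a constant parameter $\varepsilon$, converting the local-to-global characterization into an efficient testing algorithm. Fix $\varepsilon = 1/4$; the hypothesis $|\F| = \Omega(n^4+d)$ can be taken large enough that simultaneously $|\F| > 1.5n^4/\varepsilon = 6n^4$ (as required by Theorem \ref{THM:Main2}) and $|\F| \geq 4d$ (used in the non-multilinear case below).

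The algorithm samples a single $\bar{a} \in \F^n$ uniformly at random and, for every triple $I = \{i,j,k\} \subseteq [n]$, makes $O(d^3)$ non-adaptive queries to interpolate the trivariate restriction $Q_I(x_i, x_j, x_k) = P(\bar{x})\restrict{x_\ell = a_\ell,\, \ell \notin I}$, then deterministically decides whether $Q_I$ is a read-once polynomial; since every ROP is multilinear, any non-multilinear $Q_I$ is immediately rejected. The algorithm outputs ``yes'' iff every $Q_I$ passes. All queries are fixed once $\bar{a}$ is drawn, so the test is non-adaptive, and the total running time is $\poly(n,d)$.

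Completeness is immediate: if $P$ is a ROP then every partial restriction of a read-once formula remains a read-once formula, so every $Q_I$ is a ROP and the algorithm accepts with probability $1$. For soundness, suppose $P$ is not a ROP. If $P$ is multilinear but not a ROP, Theorem \ref{THM:Main2} implies that fewer than a $1/4$ fraction of $\bar{a} \in \F^n$ make $P$ three-locally read-once, so with probability at least $3/4$ our sampled $\bar{a}$ witnesses some triple $I$ whose $Q_I$ is not a ROP, and we reject. If $P$ is not multilinear then some variable $x_i$ has individual degree $\geq 2$ in $P$; the coefficient of $x_i^2$ in $P$, viewed as a polynomial in the remaining variables, is nonzero of total degree $\leq d$, so by the Schwartz--Zippel lemma it stays nonzero at our random $\bar{a}$ with probability at least $1 - d/|\F| \geq 3/4$, in which case $Q_I$ is non-multilinear for any triple $I \ni i$ and the algorithm rejects.

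The main technical point I expect to need is a deterministic decision of read-once-ness for a trivariate polynomial: up to affine rescaling and permutation of variables, the possible read-once shapes on three variables are the obvious ones built from $+$ and $\times$ (e.g.\ $x_i \star (x_j \circ x_k)$ with $\star,\circ \in \{+,\times\}$), so one can enumerate $O(1)$ templates and match the given coefficient vector against each; alternatively, one may invoke any existing polynomial-time ROP reconstruction algorithm, such as those of \cite{ShpilkaVolkovich14}. The remainder of the proof is the case split above; each case is already at the required confidence $\geq 3/4$ by the field-size hypothesis.
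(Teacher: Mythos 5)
Your proposal is correct and follows essentially the same route as the paper: pick one random $\bar{a}$, interpolate every trivariate restriction, reject on non-multilinearity or on failure of a deterministic trivariate ROP test, with soundness split exactly as in the paper between the non-multilinear case (Schwartz--Zippel applied to a coefficient that must survive the restriction) and the multilinear case (Theorem \ref{THM:Main2} with $\varepsilon = 1/4$); the only inessential difference is that the paper performs the trivariate ROP check via Lemma \ref{lem:3ROP} rather than by template matching or invoking a reconstruction algorithm. One small correction: in the non-multilinear case you should apply Schwartz--Zippel to the leading coefficient $Q$ in the decomposition $P = Q\, x_i^{e} + R$ with $e = \deg_{x_i} P \geq 2$ (as the paper does), since the coefficient of $x_i^{2}$ specifically may be zero.
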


\subsection{Techniques}

We call a polynomial $P$ \emph{separable} if it
can be represented either as $P=P_{1}+P_{2}$
or as $P=P_{1} \cdot P_{2}+c$,
where $P_{1}$ and $P_{2}$ are non-constant, variable disjoint polynomials and $c$ is a field element. 
If the above does not hold, we call $P$ \emph{non-separable}.
It follows from the definition that a multivariate read-once polynomial must be separable.
Moreover, read-once polynomials can be thought of as ``strongly'' separable polynomials since 
the above holds true for $P_{1}$ and $P_{2}$ as well.

Clearly, setting some variables to field elements in a separable polynomial $P$ results in a separable polynomial.
Yet, we might get the same result even in the case that $P$ was non-separable to begin with.
In other words, setting some variables to field elements in a \textbf{non-separable} polynomial may result in a separable polynomial.
In this paper, we show how to preserve the structure of a polynomial with respect to separability.
For this purpose, we introduce the Mapping $\B{i,j}{}$ (see Section \ref{sec:B}), which is our main technical contribution.

In a nutshell, the main argument goes as follows: 
We show that for every $P$ there exists a polynomial $\Phi(P)$, related to $P$,
such that if $\bar{a}$ is a non-zero of $\Phi(P)$ fixing the variables of $P$ to $\bar{a}$ 
preserves the (non)-separable structure of $P$. This implies that $P$ is separable iff all its restrictions to $\ab$ are separable.
Consequently, if all of $P$'s restrictions are read-once polynomials then they are separable and hence $P$ itself must be separable to begin with.
Finally, we observe that if $P$ is separable with all its restrictions being read-once polynomials
then $P$ itself must be a read-once polynomial.
In terms of finding $\ab$, we note that since $\Phi(P)$ is a low-degree polynomial a typical assignment will do the job.

\subsection{Related Work}

As was mentioned earlier, arithmetic read-once formulae have received a lot of attention in literature
\cite{HancockH91,BshoutyHH95,BshoutyBshouty98,BshoutyCleve98, ShpilkaVolkovich09, ShpilkaVolkovich14}.
Several efficient reconstruction\footnote{The \emph{reconstruction} problem is defined as follows:  given an
oracle access to a read-once polynomial $P$ output a read-once formula for $P$.}
 algorithms, both deterministic \cite{ShpilkaVolkovich14}
and randomized \cite{HancockH91,BshoutyHH95,ShpilkaVolkovich14} were given. 
In particular in \cite{ShpilkaVolkovich14}, it was shown that  
a read-once polynomial $P$ can be efficiently reconstructed from its three-variate restrictions to a typical assignment.
That is, given a typical assignment $\bar{a}$,
a read-once formula for $P$ can be constructed efficiently given 
the following three-variate polynomials: 
$\set{\text{P restricted to } x_i = a_i \text{ for every } i \not \in I}_{\set{\size{I}=3}}$
\footnote{The original statement of the result of \cite{ShpilkaVolkovich14} slightly different, but they are equivalent.}.
The problem of Polynomial Identity Testing (PIT) for 
models related to read-once formulae was studied in \cite{ShpilkaVolkovich09}.
In \cite{ShpilkaVolkovich14}, the problem of \emph{read-once testing} was introduced and studied.
Formally: given oracle access to a polynomial $P$, decide if $P$ is a read-once 
polynomial\footnote{The original formulation of the problem in  \cite{ShpilkaVolkovich14} is more general.}.
It was shown the problem of read-once testing
is computationally equivalent (up to polynomial factors) to the PIT problem. 
As a corollary, 
an efficient randomized, two-sided error algorithm for the problem was obtained.
We show that our characterization can be used to devise a simpler, one-sided error algorithm for the problem.

Yet none of the previous results provide an actual characterization for arithmetic read-once formulae;
while one exists for their Boolean counterpart \cite{Gurvich77,KLNSW93}. 
In addition, unlike the results of \cite{Gurvich77,KLNSW93},
our characterization has the flavor of: ``global structure'' iff ``local structure''.
We show that no such characterization is possible for the Boolean read-once formulae, 
not even for monotone case. 
For more details, see Section \ref{sec:lower}.

\subsection{Organization}

The paper is organized as follows. In Section \ref{sec:preliminaries}, we give the
basic definitions and notations.
In Section \ref{sec:B}, we introduce the Mapping $\B{i,j}{}$ and show its main properties;
this is the main technical contribution of our paper.
Next, in Section \ref{sec:Main}, we give our main result
and prove Theorems \ref{THM:Main1} and \ref{THM:Main2}.
We present two applications of our result in Section \ref{sec:apps} proving Theorems \ref{THM:Main3}
and \ref{THM:Main4}.
We conclude the paper in Section \ref{sec:lower}
by showing some lower bounds on the required field size and discuss some impossibility results
as well as open questions.

\section{Preliminaries}
\label{sec:preliminaries}

For a positive integer $n$, we denote $[n] =\{1,\ldots,n\}$. 
Let $\F$ be a field and denote by $\overline{\F}$ its algebraic
closure \footnote{We do not assume the existence of the algebraic closure. For our purposes  $\overline{\F}$ can be replaced
by a sufficiently large extension of field of $\F$. }.
For a graph $G=(V,E)$ and a vertex $i \in V$ 
we denote by $G^i$ the graph resulting upon removing the vertex $i$ and all of its adjacent edges from $G$. 
For a polynomial $P(x_1,\ldots,x_n)$, a variable $x_i$ and a field element $\alpha$,
we denote with $P\restrict{x_i = \alpha}$ the polynomial resulting from
setting $x_i= \alpha$. Given a subset $I\subseteq [n]$ and an assignment $\bar{a}$ $\in
\F^{n}$, we define $P \restrict{\bar{x}_I = \bar{a}_I}$  to be the
polynomial resulting from setting $x_i = a_i$ for every $i \in I$.
We say that $P \in \Fn$ \emph{depends} on ${x}_{i}$ if there exist
$\bar{a}, \bar{b}\in \overline{\F}^n$ differing only on the $i$-th coordinate such that:
$P(\bar{a}) \neq P(\bar{b})$.
We denote $\var(P) \eqdef \condset{x_i}{ P \mathrm{ \; depends \; on \; } x_{i}}$.
We often denote variables
interchangeably by their index or by their label: $i$ versus $x_i$. 

\begin{definition}[Distance]
Let $f,g : \F^n \to \F$ be functions.
We define their (relative) \emph{distance} as $\Delta(f,g) \eqdef \Pr_{\ab \in \F^n}[f(\ab) \neq g(\ab)]$.
For $\delta > 0$ we say that $f$ is \emph{$\delta$-far} from $g$ if $\Delta(f,g) > \delta$.
We can extend the definition to sets of functions. Let $S$ be a non-empty set of functions.
We say that $f$ is \emph{$\delta$-far} from $S$ if $\Delta(f,g) > \delta$ for every $g \in S$.
\end{definition}

\begin{definition}[Crossing Pair]
Let $S,T$ be two non-empty sets.
We say that $(i,j)$ is a \emph{crossing pair} for $(S,T)$ if
either $i \in T \setminus S, j \in S \setminus T$ or $j \in T \setminus S, i \in S \setminus T$. 
\end{definition}

The following is a simple fact regarding two non-trivial partitions of a set.

\begin{fact}
\label{fact:pair}
Let $X$ be a set of size $\size{X} \geq 2$. Let $(T_1, X \setminus T_1)$ and
$(T_2, X \setminus T_2)$ be two non-trivial partitions of $X$.
Then there exist  $i \neq j$ such that $(i,j)$ is a crossing pair for both
$(T_1, X \setminus T_1)$ and $(T_2, X \setminus T_2)$.
\end{fact}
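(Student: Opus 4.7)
The plan is to decompose $X$ into the four atoms of the Boolean algebra generated by $T_1$ and $T_2$, then translate the ``crossing for both partitions'' condition into a statement about which two atoms $i$ and $j$ come from, and finally use non-triviality to ensure at least one admissible pair of atoms is non-empty.

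Concretely, I would define
$A = T_1 \cap T_2$, $B = T_1 \setminus T_2$, $C = T_2 \setminus T_1$, $D = X \setminus (T_1 \cup T_2)$.
Unwinding the definition of a crossing pair, requiring $(i,j)$ to be crossing for $(T_1,X\setminus T_1)$ just says that exactly one of $i,j$ lies in $T_1$, and similarly for $T_2$. A direct check over the $4\times 4$ table of possibilities for which atom contains $i$ and which contains $j$ shows that $(i,j)$ is a crossing pair for both partitions if and only if $\{i,j\}$ has one element in $A$ and one in $D$, or one element in $B$ and one in $C$.

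Thus the fact reduces to showing that at least one of the pairs $(A,D)$, $(B,C)$ consists of two non-empty sets. I would prove this by contradiction: assume each of $(A,D)$ and $(B,C)$ contains an empty member, and consider the four cases. If $A=\emptyset$ then $T_1, T_2$ are disjoint and both non-empty, forcing $B=T_1\neq\emptyset$ and $C=T_2\neq\emptyset$, a contradiction. If $D=\emptyset$ then $T_1\cup T_2=X$, and $B=\emptyset$ would give $T_2=X$ while $C=\emptyset$ would give $T_1=X$, again contradicting non-triviality. The remaining cases are symmetric.

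There is no real conceptual obstacle here; the only thing to be careful about is the case analysis, and making sure the slightly awkward phrasing of the crossing-pair definition (with its asymmetric $(S,T)$ form applied to partitions) really does reduce to the symmetric ``one element on each side of each partition'' condition before the atom argument begins.
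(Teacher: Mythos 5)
Your proof is correct and complete. The paper itself states Fact~\ref{fact:pair} without proof (it is offered as ``a simple fact''), so there is no authorial argument to compare against; your reduction to the four atoms $A=T_1\cap T_2$, $B=T_1\setminus T_2$, $C=T_2\setminus T_1$, $D=X\setminus(T_1\cup T_2)$ is exactly the natural argument one would supply. Two small points worth making explicit when you write it up: the crossing condition for a partition $(T,X\setminus T)$ does reduce, as you say, to ``exactly one of $i,j$ lies in $T$'' (the non-emptiness hypothesis in the definition of a crossing pair is satisfied because the partitions are non-trivial), and the requirement $i\neq j$ is automatic since the admissible pairs of atoms ($A$ with $D$, or $B$ with $C$) are disjoint sets. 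Also note that your two cases $A=\emptyset$ and $D=\emptyset$ already exhaust the contradiction hypothesis, so the phrase ``the remaining cases are symmetric'' is not even needed.
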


\begin{definition}[Decomposability]
\label{def:semi dec}
Let $P \in \Fn$ be a polynomial, $i \neq j \in [n]$ and $c \in \F$. We say that $P$ is
\emph{\dec{i}{j}{c}} if $P$ can be written 
as $P = h \cdot g + c$ where $x_i \in \var(h) \setminus \var(g)$ and $x_j \in
\var(g) \setminus \var(h)$. In other words, $(i,j)$ is a crossing pair for $(\var(g), \var(h))$.
We say that $P$ is \emph{\dec{i}{j}{\F}} if $P$ is \dec{i}{j}{c} for some $c \in \F$.
\end{definition}

\begin{definition}[Variable Separability]
\label{def:sep}
We say that a polynomial $P$ is \emph{additively separable} if $P$
can be represented as $P(\bar{x})=P_{1}(\bar{x})+P_{2}(\bar{x})$
where $P_{1}$ and $P_{2}$ are non-constant, variable disjoint polynomials. 
We say that a polynomial $P$ is \emph{multiplicatively separable} if $P$
can be represented as $P(\bar{x})=P_{1}(\bar{x})\cdot P_{2}(\bar{x})+c$
where $P_{1}$ and $P_{2}$ are non-constant, variable disjoint polynomials and
$c$ is a field element. 
We say that $P$ is \emph{separable} if it is either additively or multiplicatively separable.
\end{definition}

It is easy to see that a multilinear polynomial $P$ is multiplicatively separable
if and only if it is \dec{i}{j}{\F} for some $i$ and $j$.
We finish this part by presenting a simple result from graph theory.

\begin{lemma}
\label{lem:con}
Let $G=(V,E)$ be a graph of size at least three.
Then $G$ is connected iff there exists $k \neq \ell \in V$ such that $G^{k}$ and $G^{\ell}$ are connected.
\end{lemma}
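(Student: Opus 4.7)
My plan is to handle the two directions of Lemma \ref{lem:con} separately, and the only subtlety comes from the case analysis in the backward direction.

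For the forward direction, I will use a spanning-tree argument to exhibit the two required non-cut vertices. Assume $G$ is connected, and fix a spanning tree $T$ of $G$. Since $|V| \geq 3 \geq 2$, the tree $T$ has at least two distinct leaves $k$ and $\ell$. Removing a leaf from a tree keeps it connected, so $T^{k}$ is a spanning connected subgraph of $G^{k}$, and likewise $T^{\ell}$ for $G^{\ell}$. Hence both $G^{k}$ and $G^{\ell}$ are connected, as required.

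For the backward direction, suppose $k \neq \ell$ with $G^{k}$ and $G^{\ell}$ connected; I want to show that any two vertices $u,v \in V$ are joined by a path in $G$. If $u,v \in V \setminus \{k\}$, connectivity of $G^{k}$ provides such a path (which is also a path in $G$); symmetrically if $u,v \in V \setminus \{\ell\}$. The only case not covered this way is $\{u,v\} = \{k,\ell\}$, and this is precisely where I will invoke the hypothesis $|V| \geq 3$: pick any $w \in V \setminus \{k,\ell\}$, use $G^{\ell}$ to obtain a path from $k$ to $w$ (both vertices lie in $G^{\ell}$), use $G^{k}$ to obtain a path from $w$ to $\ell$, and concatenate to get a $k$-to-$\ell$ walk in $G$.

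The argument is elementary; the only place where one can slip up is the missing case $\{u,v\}=\{k,\ell\}$ in the backward direction, which is the reason the hypothesis $|V| \geq 3$ is needed (for $|V|=2$ the statement of the lemma is vacuously violated since there are not two distinct vertices whose removal leaves something nonempty to be connected in a meaningful sense). I do not foresee any genuine obstacle, so the write-up will be short: one paragraph for the spanning-tree argument and one paragraph for the three-case reachability argument.
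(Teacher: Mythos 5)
Your proof is correct: the spanning-tree/leaf argument gives the forward direction, and the three-case reachability argument (with the auxiliary vertex $w$ handling $\{u,v\}=\{k,\ell\}$) gives the backward direction; the paper itself states Lemma \ref{lem:con} as a simple graph-theoretic fact without proof, so there is no alternative argument to compare against, and yours is the natural one. One small aside: your parenthetical about $|V|=2$ is slightly off --- the lemma fails there not vacuously but because of the edgeless graph on two vertices, where $G^{k}$ and $G^{\ell}$ are single vertices (hence connected) while $G$ is disconnected; this does not affect your proof, which only uses $|V|\geq 3$ to pick $w$.
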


\subsection{Partial Derivatives}
\label{sec:pd}

Partial derivatives of multilinear polynomials can be defined formally
over any field $\F$ by stipulating the partial derivative of a polynomial
over continuous domains. 

\begin{definition}[Partial Derivative]
\label{def:par} Let $P \in \Fn$ be a polynomial and let $i \in [n]$. 
We define the \emph{partial derivative of $P$ w.r.t. $x_i$} as
$\frac{\partial P}{\partial x_i} \eqdef P \restrict{x_i = 1} - P \restrict{x_i = 0}$. 
\end{definition}
\noindent Observe that for multilinear polynomials the sum, product, and chain rules carry over.  

\newpage
\subsection{Commutator}

We now formally introduce one of our main tools.
The Commutator was defined in \cite{ShpilkaVolkovich10} where it was used
for purposes of polynomial factorization. 
In \cite{ShpilkaVolkovich14}, it was used to devise new reconstruction algorithms for read-once formulae.
We recall its definition together with its main property.

\begin{definition}[Commutator]
\label{def:commutator} Let $P \in \Fn$ be a polynomial and let
$i,j \in [n]$. We define the \emph{commutator} between $x_i$ and
$x_j$ as $$\comm{i}{j} P \eqdef \commML{P}{x_i}{x_j}$$
\end{definition}

We now give the main property of the commutator. Recall Definition \ref{def:semi dec}.

\begin{lemma}[\cite{ShpilkaVolkovich10,ShpilkaVolkovich14}]
\label{lem:commutator}
Let $P \in \Fn$ be a multilinear polynomial, $i \neq j \in \var(P)$
and $c \in \F$.
Then $P$ is \dec{i}{j}{c} if and only if  $\comm{i}{j} P =  c \cdot \Spar{P}{i}{j}$.
\end{lemma}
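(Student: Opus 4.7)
The plan is to reduce the commutator identity to a concrete algebraic equation among four auxiliary polynomials, and then extract the factorization $P - c = h \cdot g$ via a GCD argument in the UFD $\F[\bar{x}]$. Since $P$ is multilinear, I would first write $P = x_i x_j A + x_i B + x_j C + D$, where $A, B, C, D$ are multilinear polynomials free of $x_i, x_j$. Plugging the four evaluations $P\restrict{x_i=\alpha, x_j=\beta}$ for $\alpha, \beta \in \{0,1\}$ into Definition \ref{def:commutator} collapses it to $\comm{i}{j} P = AD - BC$, while $\Spar{P}{i}{j} = A$ by Definition \ref{def:par}. Hence the claim $\comm{i}{j} P = c \cdot \Spar{P}{i}{j}$ is equivalent to the polynomial identity $A(D - c) = BC$, and it suffices to show $P$ is \dec{i}{j}{c} iff this identity holds.

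For the forward direction, suppose $P = h g + c$ with $x_i \in \var(h) \setminus \var(g)$ and $x_j \in \var(g) \setminus \var(h)$. Since $P - c$ is multilinear, $h$ and $g$ must be variable-disjoint, as any shared variable $x_k$ would produce an $x_k^2$ term in the product. Writing $h = x_i h_1 + h_0$ and $g = x_j g_1 + g_0$ with $h_1, h_0, g_1, g_0$ free of $x_i, x_j$, and matching coefficients in $h g + c$, I would read off $A = h_1 g_1$, $B = h_1 g_0$, $C = h_0 g_1$, $D - c = h_0 g_0$, from which $A(D-c) = BC$ is immediate.

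For the converse, assume $A(D-c) = BC$. I would first rule out $A \equiv 0$: the hypotheses $i, j \in \var(P)$ make $\partial P/\partial x_i = x_j A + B$ and $\partial P/\partial x_j = x_i A + C$ both nonzero, so $A \equiv 0$ would force $B, C \not\equiv 0$, contradicting $BC = 0$ in the integral domain $\F[\bar{x}]$. Now working in the UFD of polynomials free of $x_i, x_j$, set $g_1 := \gcd(A, C)$ and write $A = g_1 A'$, $C = g_1 C'$ with $\gcd(A', C') = 1$. Dividing $A(D-c) = BC$ by $g_1$ gives $A'(D-c) = B C'$, and coprimality forces $A' \mid B$; writing $B = A' B'$ then yields $D - c = B' C'$. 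Define $h := x_i A' + C'$ and $g := x_j g_1 + B'$; direct expansion confirms $h g + c = P$, and $A' \not\equiv 0$ together with $g_1 \not\equiv 0$ give $x_i \in \var(h)$ and $x_j \in \var(g)$ respectively.

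The step I expect to require most care is justifying $\var(h) \cap \var(g) = \emptyset$ in the converse, which is what qualifies the decomposition under Definition \ref{def:semi dec}. It follows from the same multilinearity principle invoked in the forward direction: each of the products $A = g_1 A'$, $B = A' B'$, $C = g_1 C'$, and $D - c = B' C'$ is multilinear, hence its two factors must be variable-disjoint. Therefore $\var(A') \cup \var(C')$ is disjoint from $\var(g_1) \cup \var(B')$, which is exactly $\var(h) \cap \var(g) = \emptyset$.
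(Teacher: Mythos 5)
Your proof is correct. Note that the paper itself gives no proof of this lemma: it is quoted from \cite{ShpilkaVolkovich10,ShpilkaVolkovich14}, so there is no in-paper argument to compare against, and what you have written is a valid self-contained derivation. Concretely, the reduction is right: writing $P = x_ix_jA + x_iB + x_jC + D$ gives $\comm{i}{j} P = AD - BC$ and $\Spar{P}{i}{j} = A$, so the identity in the lemma is equivalent to $A(D-c) = BC$; the forward direction follows by coefficient matching, and in the converse the hypothesis $i,j \in \var(P)$ correctly rules out $A \equiv 0$ (since the coefficients $x_jA+B$ and $x_iA+C$ of $x_i$ and $x_j$ are nonzero), after which the gcd argument in the UFD $\F[x_k : k \neq i,j]$ yields $h = x_iA' + C'$ and $g = x_jg_1 + B'$ with $hg + c = P$, $A' \nequiv 0$ and $g_1 \nequiv 0$. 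One small remark: the step you flag as needing the most care, showing $\var(h) \cap \var(g) = \emptyset$, is actually not required, because Definition \ref{def:semi dec} only demands the crossing-pair condition $x_i \in \var(h)\setminus\var(g)$ and $x_j \in \var(g)\setminus\var(h)$, which you already have; your disjointness argument is nevertheless correct (multilinearity of $A$, $B$, $C$, $D-c$ forces each of the four factorizations to be variable-disjoint) and simply establishes the stronger fact that the two factors are fully variable-disjoint.
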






The next useful property follows easily given the commutator.

\begin{lemma}
\label{lem:mod}
Let $P \in \Fn$ be a multilinear polynomial, $k \neq i\neq j \in [n]$ and $a_1, a_2, a_3 \in \F$ three distinct field elements.
Suppose that $P \restrict{x_k = a_t}$ is \dec{i}{j}{c} for $t \in \set{1,2,3}$ and some (fixed) $c \in \F$.
Then  $P$ is \dec{i}{j}{c}.
\end{lemma}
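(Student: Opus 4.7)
The plan is to reduce everything to an identity of polynomials via Lemma \ref{lem:commutator} and then argue that this identity, which holds after three restrictions of $x_k$, must hold identically because the defect polynomial has low degree in $x_k$.

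First, I would apply Lemma \ref{lem:commutator} to each hypothesis: since $P\restrict{x_k=a_t}$ is \dec{i}{j}{c} (and in particular $x_i,x_j \in \var(P\restrict{x_k=a_t})$), we get
\[
\comm{i}{j}\bigl(P\restrict{x_k=a_t}\bigr) \;=\; c \cdot \Spar{P\restrict{x_k=a_t}}{i}{j}
\qquad \text{for } t=1,2,3.
\]
Because $k \neq i,j$, the operations of restricting $x_k$ commute with the commutator $\comm{i}{j}$ and with the second partial derivative $\Spar{}{i}{j}$ (both are defined purely by restricting / differentiating in $x_i$ and $x_j$). Hence, setting
\[
Q \;\eqdef\; \comm{i}{j} P \;-\; c \cdot \Spar{P}{i}{j},
\]
we obtain $Q\restrict{x_k=a_t}=0$ for each $t\in\{1,2,3\}$.

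Next I would analyze the degree of $Q$ in $x_k$. Since $P$ is multilinear, each of the four restrictions $P\restrict{x_i=\alpha,x_j=\beta}$ appearing in the commutator is of degree at most $1$ in $x_k$; the commutator is a difference of two products of such restrictions, hence $\deg_{x_k}\!\bigl(\comm{i}{j} P\bigr) \leq 2$. The second derivative $\Spar{P}{i}{j}$ is a $\pm 1$ combination of those same four restrictions, so $\deg_{x_k}\!\bigl(\Spar{P}{i}{j}\bigr) \leq 1$. Altogether $\deg_{x_k}(Q)\leq 2$.

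Writing $Q = q_0 + q_1 x_k + q_2 x_k^2$ with $q_0,q_1,q_2$ polynomials in the remaining variables, the three vanishing conditions give a Vandermonde linear system with matrix $\bigl(a_t^{s}\bigr)_{t\in\{1,2,3\},\,s\in\{0,1,2\}}$ whose determinant is nonzero because $a_1,a_2,a_3$ are distinct. Hence $q_0=q_1=q_2=0$, i.e. $Q\equiv 0$. This yields $\comm{i}{j} P = c\cdot \Spar{P}{i}{j}$, and Lemma \ref{lem:commutator}, applied now to $P$ itself (note $i,j\in\var(P)$ since $i,j\in\var(P\restrict{x_k=a_1})$), concludes that $P$ is \dec{i}{j}{c}. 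The only subtlety, and thus the ``main obstacle,'' is checking that restriction in $x_k$ commutes with the commutator and second derivative, and that the commutator could a priori be quadratic in $x_k$; once one sees it is at most quadratic, three distinct roots suffice.
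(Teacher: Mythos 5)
Your proof is correct and follows essentially the same route as the paper: both reduce the hypothesis via Lemma \ref{lem:commutator} to the vanishing of a polynomial of degree at most $2$ in $x_k$ at three distinct values of $x_k$, and conclude it vanishes identically. The only cosmetic difference is that the paper first shifts to $P' = P - c$ (so the target identity is $\comm{i}{j}P' \equiv 0$) and invokes the factor argument via Lemma \ref{lem:Gauss}, whereas you keep $c$ inside the defect polynomial $Q = \comm{i}{j}P - c\cdot\Spar{P}{i}{j}$ and finish with a Vandermonde argument; both are valid and equivalent in substance.
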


\begin{proof}
Consider $P' \eqdef P - c$. Then $P' \restrict{x_k = a_t}$ is \dec{i}{j}{0} for $t \in \set{1,2,3}$.
By Lemmas \ref{lem:commutator} $\comm{i}{j} P' \restrict{x_k = a_t} \equiv 0$ which implies that 
$x_k - a_t$ is a factor of $\comm{i}{j} P'$ for $t \in \set{1,2,3}$ (Lemma \ref{lem:Gauss}). As the degree of $x_k$ in $\comm{i}{j} P'$ is at most $2$
we get that $\comm{i}{j} P' \equiv 0$ and thus $P$ is \dec{i}{j}{c}.
\end{proof}

\subsection{Some Useful Facts about Polynomials}

In this section we give three facts concerning zeros of polynomials.
We begin with the Schwartz-Zippel Lemma.

\begin{lemma}[\cite{Zippel79,Schwartz80}]
\label{lem:sz}
Let $P \in \Fn$ be a non-zero polynomial of degree at most $d$ and let $V \subseteq \F$.
Then $\Pr_{\; \bar{a} \in V^n \;}[P(\bar{a}) = 0] \leq \frac{d}{\size{V}}$.
\end{lemma}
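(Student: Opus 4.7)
The plan is to prove the Schwartz--Zippel Lemma by induction on the number of variables $n$, which is the textbook approach to this classical result.

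For the base case $n=1$, $P$ is a non-zero univariate polynomial of degree at most $d$ over $\F$. A standard fact (a non-zero univariate polynomial has at most $\deg(P)$ roots in $\F$, proved by iteratively extracting linear factors $x - \alpha$ for each root $\alpha$) gives at most $d$ roots in $V \subseteq \F$, so the probability is at most $d / |V|$.

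For the inductive step, I would let $k \geq 0$ be the degree of $x_n$ in $P$ and write
\[
P(x_1, \ldots, x_n) = \sum_{i=0}^{k} x_n^i \cdot P_i(x_1, \ldots, x_{n-1}),
\]
where $P_k$ is a non-zero polynomial in $n-1$ variables of total degree at most $d - k$. Now sample $\bar{a} = (a_1, \ldots, a_n) \in V^n$ uniformly and split the event $\{P(\bar{a}) = 0\}$ according to whether the ``leading'' coefficient $P_k(a_1, \ldots, a_{n-1})$ vanishes. By the inductive hypothesis applied to $P_k$,
\[
\Pr[P_k(a_1, \ldots, a_{n-1}) = 0] \leq \frac{d-k}{|V|}.
\]
Conditioned on $P_k(a_1, \ldots, a_{n-1}) \neq 0$, the one-variable polynomial $P(a_1, \ldots, a_{n-1}, x_n)$ has degree exactly $k$ and is non-zero, so by the base case it vanishes on at most $k$ values of $a_n \in V$, contributing probability at most $k/|V|$. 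A union bound yields
\[
\Pr[P(\bar{a}) = 0] \leq \frac{d-k}{|V|} + \frac{k}{|V|} = \frac{d}{|V|},
\]
completing the induction.

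There is no real obstacle here: the main technical point is choosing the correct decomposition (isolating the highest power of $x_n$ so that the leading coefficient has degree at most $d-k$), and then carefully bounding the two events so that the $k$'s cancel. The result only requires that $V$ be a subset of $\F$, not that $\F$ itself be finite, which is consistent with the way Lemma~\ref{lem:sz} is later invoked with $V \subseteq \F$ of polynomial size.
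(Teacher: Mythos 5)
Your proof is correct: it is the standard inductive argument for the Schwartz--Zippel lemma (isolating the top power of $x_n$, bounding the event that the leading coefficient $P_k$ vanishes via the inductive hypothesis, and using the univariate root bound conditioned on $P_k \neq 0$). The paper does not prove this statement itself but simply cites it to \cite{Zippel79,Schwartz80}, and your argument is essentially the classical proof from those references, so there is no substantive difference to report.
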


The following lemma gives a similar statement with slightly different parameterization.
A proof can be found in \cite{Alon99}.  

\begin{lemma}
\label{lem:vanish}
Let $P\in \Fn$ be a polynomial. Suppose that for every $i\in [n]$ the individual degree of each
$x_{i}$ is bounded by $d_{i}$ and let $S_{i} \subseteq \F$ be such that
$ \size{S_{i}} > d_{i}$. We denote $S=S_{1}\times S_{2}\times \cdots \times S_{n}$.
Then $P\equiv 0$ iff $P \restrict{S} \equiv 0.$
\end{lemma}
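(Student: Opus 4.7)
The plan is a straightforward induction on $n$, the number of variables, using the fact that a non-zero univariate polynomial of degree $\leq d$ has at most $d$ roots.

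The ``only if'' direction is immediate: if $P \equiv 0$ as a formal polynomial, then in particular it vanishes at every point of $S$. So the content lies in the ``if'' direction, which I would prove by induction on $n$.

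For the base case $n = 1$, $P$ is a univariate polynomial of degree at most $d_1$. If $P$ vanishes on $S_1$ and $|S_1| > d_1$, then $P$ has more roots than its degree, so $P \equiv 0$. For the inductive step, I would view $P$ as a polynomial in $x_n$ over the ring $\F[x_1,\ldots,x_{n-1}]$, writing
\[
P(x_1,\ldots,x_n) = \sum_{k=0}^{d_n} Q_k(x_1,\ldots,x_{n-1}) \cdot x_n^k,
\]
where each $Q_k$ has individual degree bounded by $d_i$ in $x_i$ for $i < n$. For any fixed $(a_1,\ldots,a_{n-1}) \in S_1 \times \cdots \times S_{n-1}$, the univariate polynomial $P(a_1,\ldots,a_{n-1},x_n)$ has degree at most $d_n$ and vanishes on $S_n$ by hypothesis, so the base case forces it to be identically zero. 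Hence every coefficient $Q_k(a_1,\ldots,a_{n-1})$ equals $0$ for all such tuples, which means each $Q_k$ vanishes on $S_1 \times \cdots \times S_{n-1}$. Applying the inductive hypothesis to each $Q_k$ yields $Q_k \equiv 0$, and therefore $P \equiv 0$.

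There is essentially no obstacle here; the argument is folklore. The only subtle point worth noting is that the statement does not follow directly from Schwartz--Zippel (Lemma~\ref{lem:sz}), since the bound there involves the \emph{total} degree, whereas here the hypothesis is phrased in terms of individual degrees with matching per-variable grid sizes $|S_i| > d_i$. The inductive structure naturally respects this per-variable bookkeeping, which is why the induction--plus--univariate-base-case approach is the cleanest route.
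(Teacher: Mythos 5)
Your proof is correct, and it is essentially the same argument as the one the paper points to: the paper gives no proof of its own but cites \cite{Alon99}, where this is proved by exactly your induction on $n$ (viewing $P$ as a polynomial in $x_n$ with coefficients in $\F[x_1,\ldots,x_{n-1}]$ and using the univariate root bound as the base case). Your remark that Schwartz--Zippel does not directly apply because of the individual-degree versus total-degree parameterization is also accurate.
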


\begin{lemma}[Gauss]
\label{lem:Gauss}
Let $P\in \mathbb{F[}x_{1},x_{2},\ldots,x_{n},y]$ be a non-zero polynomial
and $g \in \Fn$ such that $P \restrict {y=g(\bar{x})} \equiv 0$ then $y-g(\bar{x})$
is an irreducible factor of $P$ in the ring $\mathbb{F[}x_{1},x_{2},\ldots,x_{n},y].$
\end{lemma}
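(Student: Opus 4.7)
The plan is to treat $P$ as a univariate polynomial in $y$ with coefficients in the ring $R \eqdef \F[x_1,\ldots,x_n]$, and then exploit the fact that $y - g(\bar{x})$ is monic in $y$ to run ordinary polynomial division over $R$.

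First I would perform division: since $y - g(\bar{x}) \in R[y]$ is monic of degree $1$ in $y$, Euclidean division in $R[y]$ gives $P = (y - g(\bar{x})) \cdot Q(\bar{x},y) + r(\bar{x})$ for some $Q \in R[y]$ and remainder $r \in R$ (the remainder has $y$-degree strictly less than $1$, hence lies in $R$). Substituting $y = g(\bar{x})$ on both sides, the left-hand side vanishes by hypothesis and the first term on the right vanishes trivially, so $r(\bar{x}) \equiv 0$. Therefore $y - g(\bar{x})$ divides $P$ in $\F[\bar{x}, y]$.

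Next I would verify irreducibility. Suppose $y - g(\bar{x}) = u \cdot v$ with $u,v \in \F[\bar{x},y]$. Viewing both sides as polynomials in $y$ over $R$, the left side has $y$-degree $1$, so exactly one of $u, v$ has $y$-degree $1$ and the other lies in $R = \F[\bar{x}]$. Assume $u \in \F[\bar{x}]$; then $u$ divides every coefficient of $v$ regarded as a polynomial in $y$, in particular it divides the leading coefficient of $v$ in $y$, whose product with $u$ equals the leading coefficient of $y - g(\bar{x})$ in $y$, namely $1$. Hence $u$ is a unit in $\F[\bar{x}]$, i.e., a nonzero field element, so the factorization is trivial.

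The two ingredients are entirely standard and there is no real obstacle; the only point that deserves a little care is keeping the two ``degrees'' straight (degree in $y$ versus total degree) so that the division step is justified by monicity in $y$ alone, independently of how $g$ depends on $\bar{x}$. Once this is clear, both divisibility and irreducibility fall out in a few lines.
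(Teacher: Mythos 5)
Your argument is correct and complete: monicity of $y-g(\bar{x})$ in $y$ justifies Euclidean division over the coefficient ring $\F[x_1,\ldots,x_n]$, substituting $y=g(\bar{x})$ kills the remainder to give divisibility, and comparing leading coefficients in $y$ forces any degree-zero factor to be a unit, giving irreducibility. The paper states this lemma as a standard fact without supplying a proof, and what you wrote is precisely the standard argument one would give.
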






\subsection{Read-Once Formulae and Read-Once Polynomials}

Most of the definitions or small variants of them that we give in this section, are from 
\cite{HancockH91,BshoutyHH95,ShpilkaVolkovich09,ShpilkaVolkovich14}. We
start by formally defining the notions of a read-once formula and a read-once polynomial.

\begin{definition}
\label{def:ROF,ROP} An \emph{arithmetic formula}  over a field $\F$ in the variables
$\bar{x}=(x_1,\ldots,x_n)$ is a binary tree whose leaves are labelled with input variables or field elements
and whose internal nodes (gates) are labelled with the arithmetic operations $\set{+,\times }$. 
The computation is preformed by applying the gate's operation on the incoming values.
It is easy to see that an arithmetic formula computes a polynomial. \\
In a \emph{read-once formula} (ROF for short), each input variable can label at most one leaf.
A polynomial $P(\bar{x})$ is a \emph{read-once polynomial} (ROP
for short) if it is computable by a read-once formula;
otherwise, we say that $P(\xb)$ is a \emph{read-many polynomial}.
\end{definition}

Clearly, read-once polynomials form a subclass of multilinear polynomials.
Furthermore, it is immediate from the definition that the simplest ROPs are of the form $P = \alpha \cdot x_i + \beta$
when $x_i$ is a variable and $\alpha, \beta \in \F$ are field elements. 
The following lemma, which is also immediate from the definition, provides us the structure of more complex ROPs. 

\begin{lemma}[ROP Structural Lemma]
\label{ROF_Representation_Lemma} 
A polynomial $P$ with $\size{\var(P)} \geq 2$ is a ROP iff
it can be presented in one of
the following forms:
\begin{enumerate}
\item $P(\bar{x})=P_{1}(\bar{x})+P_{2}(\bar{x})$

\item $P(\bar{x})=P_{1}(\bar{x})\cdot P_{2}(\bar{x})+c$
\end{enumerate}
where $P_{1}$ and $P_{2}$ are non-constant, variable disjoint ROPs and $c \in \F$
is a field element.
\end{lemma}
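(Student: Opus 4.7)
The plan is to prove both implications, with the reverse direction being essentially immediate and the forward direction proceeding by induction on the size of a ROF for $P$.

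For the $(\Leftarrow)$ direction, assume $P$ has one of the two stated forms. Since $P_1$ and $P_2$ are ROPs, they are computed by some ROFs $F_1$ and $F_2$. Because $\var(P_1)\cap\var(P_2)=\emptyset$, the formula obtained by joining $F_1$ and $F_2$ via a $+$ gate (for form~1) or by a $\times$ gate followed by adding the constant $c$ (for form~2) is a legitimate ROF for $P$. Hence $P$ is a ROP.

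For the $(\Rightarrow)$ direction, let $F$ be a ROF computing $P$, and induct on the size of $F$. Since $\size{\var(P)}\geq 2$, the root of $F$ is an internal gate with two subtrees $F_L,F_R$ computing variable-disjoint ROPs $P_L,P_R$. If the root is $+$ and both $P_L,P_R$ are non-constant, form~1 applies immediately; if the root is $\times$ and both are non-constant, form~2 applies with $c=0$. The remaining situation is that one child, say $P_L$, equals a constant $\alpha\in\F$ (and if the root is $\times$ we must have $\alpha\neq 0$, else $P\equiv 0$). Then $P=P_R+\alpha$ or $P=\alpha\cdot P_R$, and $\var(P_R)=\var(P)$ still has size at least $2$, while $F_R$ is a strictly smaller ROF for $P_R$. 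Applying the inductive hypothesis to $P_R$ yields a decomposition $P_R=Q_1+Q_2$ or $P_R=Q_1\cdot Q_2+c'$ with $Q_1,Q_2$ non-constant variable-disjoint ROPs.

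It then remains to ``absorb'' $\alpha$ into this decomposition. In the additive case $P=P_R+\alpha$, if $P_R=Q_1+Q_2$ set $P=Q_1+(Q_2+\alpha)$ (which is form~1, since $Q_2+\alpha$ is a non-constant ROP variable-disjoint from $Q_1$), and if $P_R=Q_1\cdot Q_2+c'$ set $P=Q_1\cdot Q_2+(c'+\alpha)$ (form~2). In the multiplicative case $P=\alpha\cdot P_R$ with $\alpha\neq 0$, if $P_R=Q_1+Q_2$ set $P=(\alpha Q_1)+(\alpha Q_2)$ (form~1), and if $P_R=Q_1\cdot Q_2+c'$ set $P=(\alpha Q_1)\cdot Q_2+\alpha c'$ (form~2); scalar multiples of non-constant ROPs are again non-constant ROPs, and the variable-disjointness is preserved throughout. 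Since there are no further cases, the induction closes. The only subtlety here is the bookkeeping for when a child at the root is a constant, and I would expect this to be the only place where a careless argument could fail; everything else is mechanical.
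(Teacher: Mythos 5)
Your proof is correct; the paper states this lemma as immediate from the definition and gives no explicit proof, and your structural induction on the ROF with the bookkeeping that absorbs a constant child into the sibling's decomposition is exactly the routine argument the paper leaves implicit. Nothing to fix.
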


In terms of Definition \ref{def:sep} we get that each ROP with at least two variables is separable. 
On the other hand, observe that each bivariate multilinear polynomial is separable and thus is read-once.
Moreover, each trivariate multilinear polynomial is read-once iff it is separable.

We now define the important notion of the
\emph{gate-graph} of a polynomial. A similar notion was
defined in \cite{HancockH91,BshoutyHH95,ShpilkaVolkovich14} where it was used as a core tool in ROF reconstruction algorithms. 
Here we define it in a slightly more general form:


\begin{definition}[Gate Graph]
\label{def:gate graph}
Let $P \in \Fn$ be a polynomial. 
The \emph{gate graph} of $P$, denoted by $G_{P} = (V_P, E_P)$,
is an undirected graph whose vertex set is $V_P =\var(P)$ and its edges are defined as follows: 
$(i,j) \in E_P$ if and only if $\Spar{P}{i}{j} \nequiv 0$.
\end{definition}

In \cite{ShpilkaVolkovich14}, it was observed that given a ROP $P$ and $x_i \neq x_j \in \var(P)$ 
we have that $\Spar{P}{i}{j} \nequiv 0$ iff in every ROF computing $P$ the arithmetic operation that labels 
the least common ancestor of the unique input nodes of $x_i$ and $x_j$ is $\times$.  
We can extend this observation further.

\begin{observation}
\label{obs:plus}
Let $n \geq 2$ and let $P \in \Fn$ be a multilinear polynomial. 
Then $P$ is additively separable iff $G_P$ is disconnected. 
\end{observation}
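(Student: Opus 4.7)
The plan is to prove both implications directly from the multilinear expansion of $P$ together with the edge condition in the definition of $G_P$. First, a boundary remark: if $\size{\var(P)} \leq 1$ then $G_P$ has at most one vertex and is (trivially) connected, while $P$ cannot be additively separable for want of two non-constant, variable-disjoint summands; both sides of the equivalence fail, so I may henceforth assume $\size{\var(P)} \geq 2$.

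For the forward direction ($\Rightarrow$), I assume $P = P_1 + P_2$ with $P_1, P_2$ non-constant and $\var(P_1) \cap \var(P_2) = \emptyset$. For every $i \in \var(P_1)$ and $j \in \var(P_2)$ we have
\[
\Spar{P}{i}{j} \;=\; \Spar{P_1}{i}{j} + \Spar{P_2}{i}{j} \;\equiv\; 0,
\]
since $P_2$ does not involve $x_i$ and $P_1$ does not involve $x_j$. Thus $G_P$ contains no edge between the non-empty, disjoint sets $\var(P_1)$ and $\var(P_2)$, whose union is $\var(P)$; consequently $G_P$ is disconnected.

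For the backward direction ($\Leftarrow$), I assume $G_P$ is disconnected and fix a non-trivial partition $(V_1, V_2)$ of $\var(P)$ with no crossing edges. Writing $P$ in its multilinear monomial expansion $P = \sum_{S \subseteq \var(P)} c_S \prod_{i \in S} x_i$, the identity $\Spar{P}{i}{j} \equiv 0$ for every $i \in V_1,\, j \in V_2$ forces $c_S = 0$ whenever $S$ meets both $V_1$ and $V_2$. Hence $P = P_1 + P_2 + c_{\emptyset}$, where $P_k$ gathers the monomials supported inside $V_k$; absorbing $c_{\emptyset}$ into $P_1$ yields an additive decomposition with $\var(P_k) \subseteq V_k$.

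The only delicate point, which I expect to be the main obstacle, is confirming that $P_1$ and $P_2$ are genuinely non-constant. The key observation is that the vertex set of $G_P$ equals $\var(P)$, so each $x_i \in V_k$ satisfies $\Fpar{P}{i} \nequiv 0$; since only $P_k$ contains variables from $V_k$, we have $\Fpar{P}{i} = \Fpar{P_k}{i} \nequiv 0$, so $P_k$ depends on $x_i$ and is therefore non-constant. This completes both directions.
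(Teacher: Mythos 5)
Your proof is correct. The paper states Observation \ref{obs:plus} without proof, and your argument — vanishing of the mixed partials $\Spar{P}{i}{j}$ across the two sides in one direction, and in the other direction using the multilinear monomial expansion to kill all crossing monomials and the first partials $\Fpar{P}{i} \nequiv 0$ (for $i \in \var(P)$) to certify that both summands are non-constant — is exactly the standard argument the paper implicitly intends, including the correct handling of the degenerate case $\size{\var(P)} \leq 1$.
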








\section{The Mapping $\B{i,j}{}$}
\label{sec:B}

In this section we present our main tool along with its properties.
This is the main technical contribution of the paper.

\begin{definition}
\label{def:Bij}
For $i \neq j \in [n]$  let $\B{i,j}{}:\Fn \to \Fnxy$ be a polynomial mapping defined as follows:
\begin{equation*}
\B{i,j}{}(P)(\bar{x},\bar{y}) \eqdef
\left| \left(
\begin{array}{cc}
  \comm{i}{j}(P)(\bar{x}) &	\comm{i}{j}(P)(\bar{y}) \\
  \Spar{P}{i}{j}(\bar{x}) & \Spar{P}{i}{j}(\bar{y}) \\
\end{array}
\right)\right|
=  \comm{i}{j}(P)(\bar{x}) \cdot  \Spar{P}{i}{j}(\bar{y}) -  \Spar{P}{i}{j}(\bar{x}) \cdot 	\comm{i}{j}(P)(\bar{y})
\end{equation*}
For $i \neq j \in [n]$ and $I \subseteq [n] \setminus \set{i,j}$
let $\B{i,j}{I}:\Fn \to \Fnxy$ be a polynomial mapping defined as 
$\B{i,j}{I}(P)(\bar{x},\bar{y}) \eqdef \B{i,j}{}(P) \restrict{\bar{y}_I = \bar{x}_I}$.
\end{definition}
Intuitively, the purpose of $\B{i,j}{}(P)$ is to preserve the structure of $P$ w.r.t multiplicative separability.
The following lemma lists several useful
properties of $\B{i,j}{}$ that shed light on this intuition. We will use them implicitly in our proofs.

\begin{lemma}
\label{lem:Bij prop}
Let $n \geq 4$ and let $P \in \Fn$ be a multilinear polynomial. Let $i,j,k,\ell \in [n]$
be distinct indices. Then
the following properties hold:

\begin{enumerate}
\item $\B{i,j}{}(P) \equiv 0$ iff either $\Spar{P}{i}{j} \equiv 0$ or $P$ is \dec{i}{j}{\F}.

\item Let $\alpha \in \F$. Then $\B{i,j}{}(P \restrict{x_k = \alpha}) =  \B{i,j}{\set{k}}(P) \restrict{x_k = \alpha}$.

\item If $\B{i,j}{\set{k}}(P) \equiv 0$ and $\B{i,j}{\set{\ell}}(P) \equiv 0$ then $\B{i,j}{}(P) \equiv 0$.

\item Let $I \subseteq [n] \setminus \set{i,j}$ be of size $\size{I} \leq n-3$.
 If $\B{i,j}{I}(P) \nequiv 0$ then there exists $I \subseteq J \subseteq [n] \setminus \set{i,j}$ of size $\size{J} = n-3$ 
such that  $\B{i,j}{J}(P) \nequiv 0$.
\end{enumerate}
\end{lemma}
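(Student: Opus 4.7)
Throughout, write $c \eqdef \comm{i}{j}(P)$ and $s \eqdef \Spar{P}{i}{j}$, so that $\B{i,j}{}(P)(\xb,\yb) = c(\xb)s(\yb) - s(\xb)c(\yb)$.

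Parts 1 and 2 are essentially computational. For Part 1, the direction $(\Leftarrow)$ is immediate: $s \equiv 0$ kills both summands, while if $P$ is \dec{i}{j}{c_0} then $c = c_0 s$ by Lemma \ref{lem:commutator} and the two summands cancel. For $(\Rightarrow)$, I assume $s \nequiv 0$ and specialize $\yb$ to any $\ab \in \overline{\F}^n$ with $s(\ab) \neq 0$, extracting $c(\xb) = \lambda s(\xb)$ with $\lambda = c(\ab)/s(\ab)$; comparing $\F$-coefficients on any monomial where $s$ has a nonzero coefficient pins $\lambda \in \F$, and Lemma \ref{lem:commutator} identifies $P$ as \dec{i}{j}{\lambda}. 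For Part 2, both $\comm{i}{j}$ and $\Spar{}{i}{j}$ are signed combinations of the four restrictions of $P$ in $x_i, x_j$; these restrictions commute with any further substitution in $x_k$, so unfolding both sides of the claimed identity matches term-by-term.

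Part 3 is the main obstacle, and my plan is to leverage Parts 1 and 2. If $s \equiv 0$, Part 1 already concludes, so assume $s \nequiv 0$. The hypothesis $\B{i,j}{\set{k}}(P) \equiv 0$ rewrites in $\overline{\F}(\xb, \yb_{\notin \set{k}})$ as the rational identity
\[
\frac{c(\xb)}{s(\xb)} \;=\; \frac{c(\yb)\restrict{y_k=x_k}}{s(\yb)\restrict{y_k=x_k}} .
\]
The left-hand side depends only on $\xb$ while the right depends only on $\yb_{\notin \set{k}}$ and $x_k$; expressing the common value in lowest terms shows it involves only the single shared variable $x_k$. The same argument with $\ell$ in place of $k$ forces $c/s$ to depend only on $x_\ell$, and since $k \neq \ell$ the ratio is constant: $c = \lambda s$ for some $\lambda \in \overline{\F}$. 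The coefficient argument from Part 1 places $\lambda$ in $\F$, Lemma \ref{lem:commutator} exhibits $P$ as \dec{i}{j}{\lambda}, and another application of Part 1 yields $\B{i,j}{}(P) \equiv 0$. A pedestrian alternative I keep in reserve is to use Part 2 to reduce to $\B{i,j}{}(P\restrict{x_k=\alpha}) \equiv 0$ for every $\alpha$, invoke Part 1 to obtain decomposition constants $c_\alpha$, deduce their mutual agreement via the $\ell$-hypothesis, and finish with Lemma \ref{lem:mod}; the subtle step, in either route, is precisely this collapse of a family of scalars to one.

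For Part 4 I induct on $n - 3 - |I|$. The base case $|I| = n-3$ is trivial (take $J = I$). In the inductive step I must find one $k \notin I \cup \set{i,j}$ with $\B{i,j}{I \cup \set{k}}(P) \nequiv 0$; note $|I| \leq n - 4$, so at least two candidates $k \neq \ell$ exist. Suppose for contradiction that $\B{i,j}{I \cup \set{k}}(P) \equiv 0$ for every such $k$. The key identity, verified by unfolding the definitions of $\comm{i}{j}$, $\Spar{}{i}{j}$, and the determinant, is
\[
\B{i,j}{I}(P)\restrict{\xb_I=\ab} = \B{i,j}{}(P\restrict{\xb_I=\ab}), \qquad \B{i,j}{I \cup \set{k}}(P)\restrict{\xb_I=\ab} = \B{i,j}{\set{k}}(P\restrict{\xb_I=\ab}) .
\]
Thus for every $\ab \in \overline{\F}^{|I|}$ both $\B{i,j}{\set{k}}(P\restrict{\xb_I=\ab})$ and $\B{i,j}{\set{\ell}}(P\restrict{\xb_I=\ab})$ vanish, and Part 3, applied to the multilinear polynomial $P\restrict{\xb_I=\ab}$ (which has $n - |I| \geq 4$ variables containing the distinct indices $i, j, k, \ell$), yields $\B{i,j}{}(P\restrict{\xb_I=\ab}) \equiv 0$, i.e.\ $\B{i,j}{I}(P)\restrict{\xb_I=\ab} \equiv 0$. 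Since $\overline{\F}$ is infinite, this forces $\B{i,j}{I}(P) \equiv 0$, contradicting the inductive hypothesis and completing the step.
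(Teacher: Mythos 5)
Your proof is correct, and for Properties 1, 2 and 4 it essentially coincides with the paper's own argument (the paper also proves Property 4 by iterating Property 3 through the identity $\B{i,j}{I \cup \set{k}}(P)\restrict{\xb_I = \ab_I} = \B{i,j}{\set{k}}(P\restrict{\xb_I = \ab_I})$; your induction-plus-contradiction phrasing is the same mechanism). The genuine divergence is Property 3. The paper picks $a,b \in \overline{\F}$ with $\Spar{P}{i}{j}\restrict{x_k=a,\,x_\ell=b} \nequiv 0$, uses Property 1 to obtain decompositions $P\restrict{x_k=a}=h_1g_1+c_1$ and $P\restrict{x_\ell=b}=h_2g_2+c_2$, deduces $c_1=c_2$ by applying $\comm{i}{j}$ and $\Spar{}{i}{j}$ to the pasted identity at $x_k=a, x_\ell=b$, and then varies $a$ over three values and invokes Lemma \ref{lem:mod} (hence Lemma \ref{lem:Gauss}) to pin a single constant --- this is exactly the ``pedestrian alternative'' you keep in reserve, including its subtle step of collapsing the constants $c_\alpha$ to one. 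Your primary route is different and arguably cleaner: reading $\B{i,j}{\set{k}}(P)\equiv 0$ as saying the rational function $\comm{i}{j}(P)/\Spar{P}{i}{j}$ lies in $\F(x_k)$, and the $\ell$-hypothesis as placing it in $\F(x_\ell)$, so that it is a constant, after which Lemma \ref{lem:commutator} and Property 1 finish; this bypasses Lemma \ref{lem:mod}, Gauss' lemma and any choice of evaluation points. The one step you should spell out is the ``lowest terms'' claim, i.e.\ that an element of $\F(\xb)$ which also lies in $\F(x_k, \yb_{[n]\setminus\set{k}})$ lies in $\F(x_k)$: writing the ratio as $p/q$ in lowest terms, the identity $p(\xb)q(\xb')=q(\xb)p(\xb')$ (with $\xb'$ the $\yb$-copy having $y_k$ replaced by $x_k$) together with coprimality gives $p(\xb)\mid p(\xb')$ and $q(\xb)\mid q(\xb')$, forcing $p,q \in \F[x_k]$; this is a standard two-line fact about algebraically independent variables, so it is a presentational point, not a gap. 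Two minor remarks: the constant is automatically in $\F$ since $\F(x_k)\cap\F(x_\ell)=\F$, so your coefficient-comparison is only needed in Property 1 (where the paper leaves both it and the easy direction implicit), and your use of Lemma \ref{lem:commutator} is legitimate because $\Spar{P}{i}{j}\nequiv 0$ guarantees $i,j \in \var(P)$.
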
 

\begin{proof} 
$\;$
\begin{enumerate}
\item Suppose $\B{i,j}{}(P) \equiv 0$.
In other words:
$\comm{i}{j}(P)(\bar{x}) \cdot  \Spar{P}{i}{j}(\bar{y}) =  \Spar{P}{i}{j}(\bar{x}) \cdot 	\comm{i}{j}(P)(\bar{y})$.
If $\Spar{P}{i}{j} \equiv 0$ we are done. Otherwise, we can write: 
$$\frac{\comm{i}{j}(P)}{\Spar{P}{i}{j}}(\bar{x}) =  \frac{\comm{i}{j}(P)}{\Spar{P}{i}{j}}(\bar{y}).$$
As the LHS and the RHS defined on disjoint sets of variables, it must be the case that  
$$\frac{\comm{i}{j}(P)}{\Spar{P}{i}{j}}(\bar{x}) =  c = \frac{\comm{i}{j}(P)}{\Spar{P}{i}{j}}(\bar{y})$$
for some $c \in \F$ and the conclusion follows from Lemma \ref{lem:commutator}.

\item Follows from the definition.

\item  
Suppose $\B{i,j}{\set{k}}(P) \equiv 0$ and $\B{i,j}{\set{\ell}}(P) \equiv 0$. Assume WLOG that $\Spar{P}{i}{j} \nequiv 0$ (otherwise we are done).
Let $a,b \in \overline{\F}$ be such that $\Spar{P}{i}{j} \restrict{x_k=a,x_\ell=b} \nequiv 0$.
We have that $\B{i,j}{}(P \restrict{x_k = a}) =  \B{i,j}{\set{k}}(P) \restrict{x_k = a} \equiv 0$ and similarly $\B{i,j}{}(P \restrict{x_\ell = b}) \equiv 0$.
For the sake of simplicity, assume WLOG that $k=1$ and $\ell = 2$. 
Hence, there exist $c_1, c_2 \in \F$ such that:
$$P(\bar{x}) \restrict{x_1 = a} = h_1(\bar{x}_{L_1}) \cdot g_1(\bar{x}_{R_1}) + c_1$$
$$P(\bar{x}) \restrict{x_2 = b} = h_2(\bar{x}_{L_2}) \cdot g_2(\bar{x}_{R_2}) + c_2$$
where $2 \in L_1$, $1 \in L_2$ and $(i,j)$ is a crossing pair for both $(L_1, R_1)$ and $(L_2, R_2)$. 
Implying:
\begin{equation*}
h_1(\bar{x}_{L_1}) \restrict{x_2=b} \cdot g_1(\bar{x}_{R_1}) + c_1 - c_2 = P(\bar{x}) \restrict{x_1 = a, x_2=b} - c_2
= h_2(\bar{x}_{L_2}) \restrict{x_1=a} \cdot g_2(\bar{x}_{R_2}).
\end{equation*}
By applying $\comm{i}{j}$ and $\Spar{}{i}{j}$ to the equation we obtain:
\begin{gather*}
(c_1 - c_2) \cdot \Spar{P}{i}{j} \restrict{x_1=a,x_2=b} = 
(c_1 - c_2) \cdot \Spar{ \left(h_1 \restrict{x_2=b} \cdot g_1 \right) }{i}{j} \\ =
\comm{i}{j}\left(h_1 \restrict{x_2=b} \cdot g_1 + c_1 - c_2 \right) = 
\comm{i}{j}\left(h_2 \restrict{x_1=a} \cdot g_2 \right) \equiv 0.
\end{gather*}
Since $a$ and $b$ were chosen such that $\Spar{P}{i}{j} \restrict{x_1=a,x_2=b} \nequiv 0$ we obtain that $c_1 = c_2$.
By repeating this reasoning, we can fix $b$ and choose many distinct elements $a_t \in \F$ for which 
$\Spar{P}{i}{j} \restrict{x_1=a_t,x_2=b} \nequiv 0$ would imply that 
$P \restrict{x_1 = a_t}$ is \dec{i}{j}{c_1}. 
By Lemma \ref{lem:mod}, $P$ is \dec{i}{j}{c_1} and thus $\B{i,j}{}(P) \equiv 0$.

\item 
Follows by an iterative application of the following claim, a generalization of Property $3$.

\begin{claim}
Let $i,j,k,\ell \in [n]$ be distinct indices and let $I \subseteq [n] \setminus \set{i,j,k,\ell}$.
Then $\B{i,j}{I \cup \set{k}}(P) \equiv 0$ and $\B{i,j}{I \cup \set{\ell}}(P) \equiv 0 \implies \B{i,j}{I}(P) \equiv 0$.
\end{claim}

\begin{proof}[Proof of the claim]
Assume for a contradiction that $\B{i,j}{I}(P) \nequiv 0$. Then there exists $\bar{a} \in \overline{\F}$ such that 
$\B{i,j}{I}(P) \restrict{\bar{x}_I = \bar{a}_I} \nequiv 0$ or equivalently 
$\B{i,j}{}(P \restrict{\bar{x}_I = \bar{a}_I}) \nequiv 0$. Applying Property $3$ that we have just proved, we obtain WLOG that
$\B{i,j}{I \cup \set{k}}(P) \restrict{\bar{x}_I = \bar{a}_I} = \B{i,j}{\set{k}}(P \restrict{\bar{x}_I = \bar{a}_I}) \nequiv 0$
leading us to a contradiction.
\end{proof}
\unskip
\end{enumerate}
\end{proof}

To provide some addition intuition on the defined mapping
we present a useful application. 
We exhibit a simple criterion (to be used later) which can be applied to test if a given trivariate multilinear polynomial is a read-once polynomial.

\begin{lemma}
\label{lem:3ROP}
Let $P(x_1, x_2, x_3) \in \F[x_1,x_2,x_3]$ be a trivariate multilinear polynomial.
Then $P$ is a ROP iff at least two of the following polynomials are identically zero $\set{\B{1,2}{}(P) \;,\; \B{1,3}{}(P) \;,\; \B{2,3}{}(P)}$.
\end{lemma}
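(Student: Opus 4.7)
The approach is to translate each hypothesis $\B{i,j}{}(P)\equiv 0$ through Property $1$ of Lemma \ref{lem:Bij prop}, which gives a clean dichotomy: either $\Spar{P}{i}{j}\equiv 0$ (the edge $(i,j)$ is absent from the gate graph $G_P$) or $P$ is \dec{i}{j}{\F} (a multiplicative decomposition separates $x_i$ and $x_j$). Since there are only three variables, every condition of either type forces global structure on $P$.

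For the ``only if'' direction, I would handle the case $\size{\var(P)}\leq 2$ first: two of the three second partials $\Spar{P}{i}{j}$ vanish for trivial reasons, and hence two of the $\B{i,j}{}(P)$ are zero. If $P$ genuinely depends on all three variables, I would invoke the ROP Structural Lemma. In the additive case $P=P_{1}(x_{i})+P_{2}(x_{j},x_{k})$ both $\Spar{P}{i}{j}$ and $\Spar{P}{i}{k}$ are identically zero, while in the multiplicative case $P=P_{1}(x_{i})\cdot P_{2}(x_{j},x_{k})+c$ the polynomial is both \dec{i}{j}{c} and \dec{i}{k}{c}. In both sub-cases two of the three $\B{\cdot,\cdot}{}(P)$ vanish by Property $1$.

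For the ``if'' direction, assume by symmetry that $\B{1,2}{}(P)\equiv 0$ and $\B{1,3}{}(P)\equiv 0$, and split into four sub-cases according to which disjunct of Property $1$ holds for each. If both partials $\Spar{P}{1}{2}$ and $\Spar{P}{1}{3}$ vanish, then $x_{1}$ is isolated in $G_{P}$, so $G_{P}$ is disconnected (or reduces to one vertex) and Observation \ref{obs:plus} gives an additive separation, hence an ROP. If $\Spar{P}{1}{2}\equiv 0$ while $P=hg+c$ with $x_{1}\in \var(h)\setminus \var(g)$ and $x_{3}\in \var(g)\setminus\var(h)$, multilinearity of $P$ forces $h$ and $g$ to be variable-disjoint, so $x_{2}$ lies either in $\var(h)$ or in $\var(g)$ (or outside $\var(P)$) and $P$ is immediately exhibited as a ROP (in the sub-sub-case $x_{2}\in\var(h)$, the extra constraint $\Spar{P}{1}{2}\equiv 0$ forces $h(x_{1},x_{2})$ to be additively separable, which still yields an ROP form). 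The third sub-case is symmetric. In the fourth sub-case, $P$ is simultaneously \dec{1}{2}{c_{1}} and \dec{1}{3}{c_{2}}; writing $P-c_{1}=h_{1}g_{1}$ with $x_{1}\in\var(h_{1})$, $x_{2}\in\var(g_{1})$, the variable $x_{3}$ must sit in one of $h_{1},g_{1}$, and either placement directly produces an ROP factorization.

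The only delicate point is case (iv) together with the $x_{2}\in\var(h)$ sub-sub-case of (ii): I need to ensure that the multiplicative decomposition supplied by \dec{i}{j}{\F} can be taken with variable-disjoint factors. This follows because $h\cdot g$ has to be multilinear, and an elementary argument on the coefficients of a shared variable shows one of the two factors must drop it. Once this is in hand, every case reduces to writing $P$ explicitly in one of the two forms of Lemma \ref{ROF_Representation_Lemma}, completing the proof.
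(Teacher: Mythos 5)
Your proof is correct and takes essentially the same route as the paper's: both directions rest on Property~1 of Lemma~\ref{lem:Bij prop}, with the additive case handled through the gate graph and Observation~\ref{obs:plus} and the multiplicative case through decomposability, combined with Lemma~\ref{ROF_Representation_Lemma}. Your four sub-cases and the shared-variable coefficient argument merely make explicit the fact the paper asserts as easy (that a multilinear polynomial which is \dec{i}{j}{\F} is multiplicatively separable with variable-disjoint factors), so the substance coincides with the paper's shorter argument.
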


\begin{proof}
Assume WLOG that $\var(P) = [3]$. Otherwise $P$ is a uni/bivariate polynomial and is clearly a ROP.
In addition, from Lemma \ref{ROF_Representation_Lemma} $P$ is a ROP iff it is separable. So, we show the claim regarding separability. 
Suppose $P$ is separable. WLOG either $P(x_1, x_2, x_3) = P_1(x_1) + P(x_2, x_3)$ or $P(x_1,x_2, x_3) = P_1(x_1) \cdot P(x_2, x_3) + c$.
In both cases, $\B{1,2}{}(P)= \B{1,3}{}(P) \equiv 0$. 
Now assume WLOG that $\B{1,2}{}(P)= \B{1,3}{}(P) \equiv 0$.
By Lemma \ref{lem:Bij prop}, either $\Spar{P}{1}{2} \equiv 0$ or $P$ is \dec{1}{2}{\F}.
If the latter holds, then $P$ is multiplicatively separable and we are done.
By the same reasoning, we can assume WLOG that $P$ is not \dec{1}{3}{\F} either.
Consequently $\Spar{P}{1}{2} = \Spar{P}{1}{3} \equiv 0$, which implies
that $G_P$ - the gate graph of $P$ is disconnected.
By Observation \ref{obs:plus}, $P$ must be additively separable which completes the proof.
\end{proof}

\section{Main}
\label{sec:Main}

In this section we give our main results proving Theorems \ref{THM:Main1} and \ref{THM:Main2}. 
As was suggested earlier, we would like to preserve the structure of a given polynomial $P$ w.r.t separability.
To this end, we define the following polynomial mapping:

\begin{definition}
\label{def:phi}
$\phi:\Fn \to \Fnxy$ \\
$$\phi(P)(\bar{x},\bar{y}) \eqdef 
\prod \limits _{t \in [n]} \Fpar{P}{t}(\bar{x}) \cdot
\prod \limits _{i \neq j \in [n]} \Spar{P}{i}{j}(\bar{x}) \cdot 
\prod \limits _{k \neq i,j}  \B{i,j}{\set{k}}(P)(\bar{x},\bar{y})$$
when the product is only on the corresponding \textbf{non-zero} multiplicands.
If all the corresponding multiplicands are identically zero, we define $\phi(P)(\bar{x},\bar{y}) \eqdef 1$.
\end{definition}

The next propositions demonstrate the crucial properties of $\phi(P)$. 
In what follows, 
let $P \in \Fn$ be a multilinear polynomial and $\bar{a} \in \F^n$ be such that $\phi(P)(\bar{a},\bar{y}) \nequiv 0 $.  

\begin{proposition}
\label{prop:Gp}
For each $k \in [n]$, it holds that $G_{P \restrict{x_k = a_k}} = G^{k}_P$.
\end{proposition}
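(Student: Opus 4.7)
Proof proposal for Proposition~\ref{prop:Gp}:

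The plan is to unpack what $\phi(P)(\bar a, \bar y)\nequiv 0$ tells us about each factor in the product defining $\phi(P)$, and then to verify vertex equality and edge equality separately for $G_{P\restrict{x_k=a_k}}$ and $G^{k}_P$. Since $\phi(P)$ is a product (over the nonzero multiplicands), $\phi(P)(\bar a,\bar y) \nequiv 0$ forces every nonzero multiplicand to be nonzero after substituting $\bar x = \bar a$. In particular, for every $t\in[n]$ with $\Fpar{P}{t}\nequiv 0$ we get $\Fpar{P}{t}(\bar a)\ne 0$, and for every $i\ne j\in[n]$ with $\Spar{P}{i}{j}\nequiv 0$ we get $\Spar{P}{i}{j}(\bar a)\ne 0$ (the $\B{i,j}{\set{k}}$ factors are not needed here; they will be used in later propositions).

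For the vertex set, I would show $\var(P\restrict{x_k=a_k}) = \var(P)\setminus\set{x_k}$. The inclusion ``$\subseteq$'' is immediate since $x_k$ has been substituted by a constant. For the reverse inclusion, take $t\ne k$ with $x_t\in\var(P)$, so $\Fpar{P}{t}\nequiv 0$. By the observation above, $\Fpar{P}{t}(\bar a)\ne 0$. Since partial derivatives commute with substitution of a different variable, $\Fpar{(P\restrict{x_k=a_k})}{t} = \Fpar{P}{t}\restrict{x_k=a_k}$, and this polynomial, when further evaluated at the remaining coordinates of $\bar a$, equals $\Fpar{P}{t}(\bar a)\ne 0$; hence it is not identically zero, so $x_t\in\var(P\restrict{x_k=a_k})$. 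This matches the vertex set of $G^{k}_P$, which is $\var(P)\setminus\set{x_k}$.

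For the edge set, fix $i\ne j$ in $\var(P)\setminus\set{x_k}$. Again using that mixed partial derivatives commute with substitution, $\Spar{(P\restrict{x_k=a_k})}{i}{j} = \Spar{P}{i}{j}\restrict{x_k=a_k}$. If $(i,j)\notin E_P$ then $\Spar{P}{i}{j}\equiv 0$, so the restriction is also identically zero, and $(i,j)\notin E_{P\restrict{x_k=a_k}}$. Conversely, if $(i,j)\in E_P$ then $\Spar{P}{i}{j}\nequiv 0$, and the opening observation yields $\Spar{P}{i}{j}(\bar a)\ne 0$; hence $\Spar{P}{i}{j}\restrict{x_k=a_k}$ is not identically zero (as it evaluates to a nonzero value at the remaining coordinates of $\bar a$), so $(i,j)\in E_{P\restrict{x_k=a_k}}$. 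Edges incident to $k$ are absent from both graphs by construction, which completes the identification $G_{P\restrict{x_k=a_k}} = G^{k}_P$.

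The only subtle point — and the one I would flag as the main obstacle — is making sure that $\phi(P)(\bar a,\bar y)\nequiv 0$ actually implies that the purely-$\bar x$ factors $\Fpar{P}{t}(\bar a)$ and $\Spar{P}{i}{j}(\bar a)$ are nonzero constants in $\bar y$; this uses the convention that the product is taken only over nonzero multiplicands, so a nonzero multiplicand cannot vanish at $\bar a$ without making the whole product vanish as a polynomial in $\bar y$. Everything else reduces to standard manipulations of partial derivatives under substitution.
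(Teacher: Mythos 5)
Your proof is correct and follows essentially the same route as the paper: the non-vanishing of $\phi(P)(\bar a,\bar y)$ forces each nonzero multiplicand $\Spar{P}{i}{j}$ (and $\Fpar{P}{t}$) to survive the substitution $\bar x=\bar a$, so every edge (and vertex) of $G^{k}_P$ persists in $G_{P\restrict{x_k=a_k}}$, while the reverse inclusion is immediate. The only difference is that you spell out the vertex-set equality via the $\Fpar{P}{t}$ factors, which the paper subsumes under "clearly."
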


\begin{proof}
Clearly, $G_{P \restrict{x_k = a_k}} \subseteq G^{k}_P$. Now, let $(i,j) \in  G^{k}_P$.
By definition, $\Spar{P}{i}{j} \nequiv 0$ and hence $\Spar{P}{i}{j}$ appears as a multiplicand in $\phi(P)$.
Since $\phi(P)(\bar{a},\bar{y}) \nequiv 0 $ we have that
$\Spar{P \restrict{x_k = a_k}}{i}{j} \nequiv 0$ implying that $(i,j) \in G_{P \restrict{x_k = a_k}}$
and thus establishing that $ G^{k}_P \subseteq G_{P \restrict{x_k = a_k}}$.
\end{proof}

\begin{proposition}
\label{prop:B}
Let $k \neq \ell \in [n]$. 
If $\B{i,j}{}(P \restrict{x_u = a_u}) \equiv 0$ for $u=k,\ell$ then $\B{i,j}{}(P) \equiv 0$.
\end{proposition}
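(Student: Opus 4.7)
The plan is to combine Properties~2 and~3 of Lemma~\ref{lem:Bij prop} with the non-vanishing hypothesis $\phi(P)(\ab,\yb) \nequiv 0$ to reduce the claim to a direct application of Property~3. The role of $\phi(P)$ here is precisely what it was designed for: since every non-zero $\B{i,j}{\set{k}}(P)$ is a multiplicand of $\phi(P)$, the non-vanishing of $\phi(P)(\ab,\yb)$ lets us upgrade ``vanishing after substitution'' to ``identical vanishing,'' after which the structural Property~3 closes the argument.

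First, I would rewrite the hypothesis in a more convenient form. By Property~2 of Lemma~\ref{lem:Bij prop}, $\B{i,j}{}(P \restrict{x_u = a_u}) = \B{i,j}{\set{u}}(P) \restrict{x_u = a_u}$, so the assumption becomes $\B{i,j}{\set{u}}(P) \restrict{x_u = a_u} \equiv 0$ for $u \in \set{k,\ell}$. The key step is to show that in fact $\B{i,j}{\set{k}}(P) \equiv 0$ and $\B{i,j}{\set{\ell}}(P) \equiv 0$ as formal polynomials in $(\xb,\yb)$. Suppose toward a contradiction that $\B{i,j}{\set{k}}(P) \nequiv 0$. Then by Definition~\ref{def:phi} it appears as a genuine (non-zero) multiplicand in $\phi(P)(\xb,\yb)$, so $\phi(P)(\ab,\yb) \nequiv 0$ forces $\B{i,j}{\set{k}}(P)(\ab,\yb) \nequiv 0$ as a polynomial in $\yb$. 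On the other hand, $\B{i,j}{\set{k}}(P) \restrict{x_k = a_k} \equiv 0$ says the polynomial already vanishes once $x_k$ is fixed to $a_k$, and in particular it still vanishes upon the further substitution $\xb = \ab$. This contradiction forces $\B{i,j}{\set{k}}(P) \equiv 0$, and the same argument applied to $\ell$ gives $\B{i,j}{\set{\ell}}(P) \equiv 0$. Property~3 of Lemma~\ref{lem:Bij prop} then yields $\B{i,j}{}(P) \equiv 0$.

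The main obstacle, if one wants to call it that, is the ``upgrade'' step in the middle: going from the local hypothesis $\B{i,j}{\set{u}}(P) \restrict{x_u = a_u} \equiv 0$ to the global identity $\B{i,j}{\set{u}}(P) \equiv 0$. This is not merely a Schwartz--Zippel-style argument but relies on the fact that $\phi(P)$ was explicitly built to include each non-zero $\B{i,j}{\set{k}}(P)$ as a multiplicand, so that a single ``good'' assignment $\ab$ simultaneously witnesses the non-vanishing of all of them. Once this reduction is made, everything else is mechanical: Property~2 translates the restriction-based hypothesis into the language of the $\B{i,j}{\set{u}}$ mappings, and Property~3 packages the final implication.
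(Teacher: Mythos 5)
Your proof is correct and uses exactly the same ingredients as the paper's: Property 2 to rephrase the hypothesis, the fact that any non-zero $\B{i,j}{\set{u}}(P)$ appears as a multiplicand of $\phi(P)$ together with $\phi(P)(\ab,\yb)\nequiv 0$, and Property 3 to conclude. The only difference is organizational — you prove $\B{i,j}{\set{k}}(P)\equiv\B{i,j}{\set{\ell}}(P)\equiv 0$ directly and apply Property 3 forward, whereas the paper argues by contradiction via the contrapositive of Property 3 — which is not a substantive departure.
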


\begin{proof}
Assume for a contradiction that $\B{i,j}{}(P) \nequiv 0$.
Then $\B{i,j}{\set{u}}(P) \nequiv 0$ for either $u=k$ or $u=\ell$.
Suppose, $u=k$. Then $\B{i,j}{\set{k}}(P)$ appears as a multiplicand in $\phi(P)$.
Since $\phi(P)(\bar{a},\bar{y}) \nequiv 0 $ we have that $\B{i,j}{}(P \restrict{x_k = a_k}) = \B{i,j}{\set{k}}(P) \restrict{x_k = a_k} \nequiv 0$,
thus leading to a contradiction. 
\end{proof}

\begin{proposition}
\label{prop:Sep ROP}
$P$ is a ROP iff $P$ is separable and for each $k \in [n]$ 
$P \restrict{x_k = a_k}$ is a ROP.
\end{proposition}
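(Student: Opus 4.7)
My plan is to prove both directions of the iff separately. The forward direction is almost immediate: by the ROP Structural Lemma (Lemma \ref{ROF_Representation_Lemma}), any multilinear ROP with at least two variables is separable, and since substituting a field element for a variable in a ROF yields a ROF, every restriction $P \restrict{x_k = a_k}$ is a ROP.

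For the reverse direction, I would first upgrade the local hypothesis ``$P \restrict{x_k = a_k}$ is a ROP for each $k$'' to ``$P \restrict{\bar{x}_I = \bar{a}_I}$ is a ROP for every non-empty $I \subseteq [n]$''. This follows by an easy induction on $|I|$: a single-variable restriction of a ROP is a ROP (replace the corresponding leaf of any ROF by the field element), so variables can be peeled off one at a time from any restriction already known to be a ROP.

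Next, I would use the separability of $P$ to write either $P = P_1 + P_2$ or $P = P_1 \cdot P_2 + c$ with $P_1, P_2$ non-constant and variable-disjoint, and apply the previous step to $I := \var(P_2)$, which is non-empty. The restricted polynomial $Q := P \restrict{\bar{x}_I = \bar{a}_I}$ is then a ROP, and it equals either $P_1 + P_2(\bar{a})$ (additive case) or $P_1 \cdot P_2(\bar{a}) + c$ (multiplicative case). In the additive case, $P_1 = Q - P_2(\bar{a})$ is a constant shift of the ROP $Q$, and hence itself a ROP. In the multiplicative case, I need $P_2(\bar{a}) \neq 0$ in order to recover $P_1 = (Q - c)/P_2(\bar{a})$ as an affine transform of the ROP $Q$. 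This is where the standing assumption $\phi(P)(\bar{a},\bar{y}) \nequiv 0$ becomes essential: for every $i \in \var(P)$ the polynomial $\Fpar{P}{i}$ is non-zero and therefore appears as a multiplicand of $\phi(P)$, so $\Fpar{P}{i}(\bar{a}) \neq 0$; choosing $i \in \var(P_1)$ and using $\Fpar{P}{i} = \Fpar{P_1}{i} \cdot P_2$ forces $P_2(\bar{a}) \neq 0$.

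A symmetric argument with $I := \var(P_1)$ yields that $P_2$ is also a ROP, after which the Structural Lemma directly assembles $P$ as a ROP. The only genuinely nontrivial ingredient is the multiplicative case: the argument would collapse if $P_2(\bar{a}) = 0$ (since then $Q$ degenerates to the constant $c$ and no information about $P_1$ can be extracted), and ruling this out is precisely what the $\phi$-condition is tailored to do.
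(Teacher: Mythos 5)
Your proof is correct and takes essentially the same route as the paper's: it splits into the additive and multiplicative cases of separability, recovers $P_1$ and $P_2$ as affine transforms of restrictions of $P$ at $\ab$ (which are ROPs), and uses the non-vanishing of $\Fpar{P}{t}(\ab)$ guaranteed by $\phi(P)(\ab,\yb) \nequiv 0$ to rule out $P_2(\ab_R)=0$ in the multiplicative case, exactly as in the paper. The only difference is cosmetic: you first explicitly upgrade the single-variable restrictions to restrictions on arbitrary subsets, a step the paper performs implicitly.
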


\begin{proof} 
There are two case to consider: \\
\textbf{Case $1$:} $P(\xb_L, \xb_R) = P_1(\xb_L)  + P_2(\xb_R)$. Pick $k \in R$.
Then $P \restrict{x_k = a_k} = P_1(\xb_L)  + P_2(\xb_R)\restrict{x_k = a_k}$ is a ROP,
implying that $P_1(\xb_L) = P(\xb_L, \ab_R) - P_2(\ab_R)$ is a ROP as well.
Similarly, we get that  $P_2(\xb_R)$ is a ROP. As $P_1$ and $P_2$ are defined over disjoint sets of variables
by Lemma \ref{ROF_Representation_Lemma}, $P$ is a ROP to begin with.
\\ \\ \textbf{Case $2$:} $P(\xb_L, \xb_R) = P_1(\xb_L) \cdot P_2(\xb_R) + c$. 
Pick $t \in L$. Then $\Fpar{P_1(\ab_L)}{t} \cdot P_2(\ab_R) = \Fpar{P(\ab_L, \ab_R)}{t} \neq 0$ and in particular $P_2(\ab_R) \neq 0$.
Now pick $k \in R$.
Then $P \restrict{x_k = a_k} = P_1(\xb_L) \cdot P_2(\xb_R)\restrict{x_k = a_k} + c$ is a ROP,
implying that $P_1(\xb_L) = \frac{P(\xb_L, \ab_R) - c}{P_2(\ab_R)}$ is a ROP as well. Note that 
the operation is well-defined as $P_2(\ab_R) \neq 0$.
Similarly, we get that  $P_2(\xb_R)$ is a ROP. As $P_1$ and $P_2$ are defined over disjoint sets of variables
by Lemma \ref{ROF_Representation_Lemma} $P$ is a ROP to begin with.
\end{proof}

We can now prove our main result.
We start by proving a weaker, ``baby'' case of the result, 
which will be used as an inductive step in the main proof. 

\newpage

\begin{lemma}
\label{lem:main}
Let $n \geq 4$ and let $P \in \Fn$ be a multilinear polynomial.
Let $\bar{a} \in \F^n$ be such that $\phi(P)(\bar{a},\bar{y}) \nequiv 0 $.  
Then $P$ is a ROP iff for each $k \in [n]$ $P \restrict{x_k = a_k}$ is a ROP. 
\end{lemma}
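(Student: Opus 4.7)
I would prove the nontrivial direction: assuming each $P \restrict{x_k = a_k}$ is a ROP, show $P$ is a ROP. The converse is immediate from Lemma \ref{ROF_Representation_Lemma}. By Proposition \ref{prop:Sep ROP}, together with the hypothesis, it suffices to establish that $P$ is separable. If some $k \in [n]$ lies outside $\var(P)$ then $P = P \restrict{x_k = a_k}$ is a ROP by hypothesis, so I may assume $\var(P) = [n]$; the condition $\phi(P)(\bar a, \bar y) \nequiv 0$ then forces $\Fpar{P}{t}(\bar a) \neq 0$ for every $t \in [n]$, which guarantees $\var(P \restrict{x_k = a_k}) = [n] \setminus \{k\}$.

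I split on the connectivity of $G_P$. If $G_P$ is disconnected, Observation \ref{obs:plus} directly yields additive separability. Otherwise $G_P$ is connected on $n \geq 3$ vertices, and Lemma \ref{lem:con} produces distinct non-cut vertices $k \neq \ell$. Proposition \ref{prop:Gp} gives $G_{P \restrict{x_u = a_u}} = G^u_P$ connected for $u \in \{k,\ell\}$; since each restriction is a ROP on $n - 1 \geq 3$ variables with connected gate graph, Observation \ref{obs:plus} rules out additive separability, so Lemma \ref{ROF_Representation_Lemma} furnishes multiplicative decompositions $P \restrict{x_u = a_u} = h_u \cdot g_u + c_u$ with non-constant variable-disjoint $h_u, g_u$. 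Write $(L_u, R_u) = (\var(h_u), \var(g_u))$.

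The central step is to find, via Fact \ref{fact:pair}, a common crossing pair $(i, j)$ of $(L_1, R_1)$ and $(L_2, R_2)$ with $i, j \in [n] \setminus \{k, \ell\}$ (so that Proposition \ref{prop:B} applies cleanly). Applied to the partitions induced on $X = [n] \setminus \{k, \ell\}$ by $(L_u, R_u)$, Fact \ref{fact:pair} produces such a pair whenever both induced partitions on $X$ are non-trivial. With such $(i, j)$, each restriction $P \restrict{x_u = a_u}$ is \dec{i}{j}{c_u} by construction, so Lemma \ref{lem:Bij prop}(1) yields $\B{i,j}{}(P \restrict{x_u = a_u}) \equiv 0$; Proposition \ref{prop:B} then forces $\B{i,j}{}(P) \equiv 0$. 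Since $\Spar{P}{i}{j} \restrict{x_k = a_k} = \Fpar{h_1}{i} \cdot \Fpar{g_1}{j} \nequiv 0$, we have $\Spar{P}{i}{j} \nequiv 0$, and Lemma \ref{lem:Bij prop}(1) concludes $P$ is \dec{i}{j}{\F}. Hence $P$ is multiplicatively separable, and Proposition \ref{prop:Sep ROP} upgrades this to ROP-ness.

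The main obstacle is securing non-triviality of the induced partitions on $X$: if $\ell$ happens to be the sole element on its side of $(L_1, R_1)$, the corresponding induced partition collapses, and Fact \ref{fact:pair} does not directly produce a usable crossing pair. Since $\var(P \restrict{x_k = a_k})$ has $n - 1 \geq 3$ elements, at least one side of $(L_1, R_1)$ contains $\geq 2$ elements, so choosing $\ell$ within the larger side (and, symmetrically, $k$ within the larger side of $(L_2, R_2)$) restores non-triviality; a short graph-theoretic argument verifies that such $k, \ell$ can still be chosen as non-cut vertices of $G_P$ by exploiting the join structure of $G^k_P = G_{h_1} \ast G_{g_1}$.
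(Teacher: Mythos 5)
Your skeleton coincides with the paper's own proof up to its hardest step: you reduce to separability via Proposition \ref{prop:Sep ROP}, split on whether $G_P$ is connected, obtain two non-cut vertices from Lemma \ref{lem:con}, get multiplicative decompositions of the two restrictions from Proposition \ref{prop:Gp}, Observation \ref{obs:plus} and Lemma \ref{ROF_Representation_Lemma}, and in the generic case extract a common crossing pair $(i,j)$ with $i,j\notin\{k,\ell\}$ via Fact \ref{fact:pair}, lifting decomposability to $P$ through Lemma \ref{lem:Bij prop} and Proposition \ref{prop:B}. All of that is sound and is exactly the paper's Cases $1$ and $2a$.

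The gap is your final paragraph, the degenerate case in which the other restricted variable sits alone on one side of a decomposition (the paper's Cases $2b$/$2c$), and this is precisely where the real work of the lemma lies. Your proposed fix --- ``choose $\ell$ within the larger side of $(L_1,R_1)$ and, symmetrically, $k$ within the larger side of $(L_2,R_2)$'' --- is circular as stated: $(L_1,R_1)$ is a decomposition of $P\restrict{x_k=a_k}$ and so depends on which vertex you restrict, and $(L_2,R_2)$ likewise depends on $\ell$, so re-choosing either vertex changes the constraint on the other and you give no argument that a simultaneous choice exists. Nor does the join structure $G^k_P=G_{h_1}\ast G_{g_1}$ by itself deliver what you need: picking a new $\ell'$ in the larger side of $(L_k,R_k)$ only makes the induced partition at $k$ non-trivial; you must still show (a) that $G^{\ell'}_P$ is connected, which requires knowing that $x_k$ has a neighbour in $G_P$ other than $\ell'$ (the join at $k$ says nothing about the edges incident to $k$ itself), and, more importantly, (b) that $k$ is not alone in some decomposition of $P\restrict{x_{\ell'}=a_{\ell'}}$ --- nothing in your sketch addresses (b), and if it fails for every admissible $\ell'$ the ``symmetric'' re-choice never terminates. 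The paper resolves exactly this situation by a different device: when $R_1=\{\ell\}$ it uses that $\ell$ is then adjacent to every vertex of $[n]\setminus\{k,\ell\}$, brings in a third restricted variable $w\notin\{k,\ell,u\}$ (with $u$ on the side of $(L_2,R_2)$ opposite to $k$), shows $G^w_P$ is connected by combining edges inherited from both join structures, and then case-splits on the decomposition of $P\restrict{x_w=a_w}$ to produce a crossing pair shared with one of the two original decompositions. Some argument of this kind is unavoidable; as written, your ``short graph-theoretic argument'' is a placeholder for the crux of the proof rather than a proof of it.
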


\begin{proof}
First of all, we can assume WLOG that $\var(P) = [n]$. Otherwise, let $k  \in [n] \setminus \var(P)$.
Then $P = P \restrict{x_k = a_k}$ and we are done. 
Given Proposition \ref{prop:Sep ROP}, it is sufficient to show that $P$ is separable. 
Consider the graphs $\set{G^k_P}_{k \in [n]}$. There can be two cases: \\
\textbf{Case $1$:} There exists at most one $k \in [n]$ such that $G^k_P$ is connected.
In this case, by Lemma \ref{lem:con} $G_P$ is disconnected and hence $P$ is additively separable from Observation \ref{obs:plus}. 
\\ \\ \textbf{Case $2$:} There exist $k \neq \ell \in [n]$ such that $G^k_P$ and $G^\ell_P$ are both connected.
We claim that in this case $P$ is multiplicatively separable. For the sake of simplicity, assume WLOG that $k=1$ and $\ell = 2$. 
By Proposition \ref{prop:Gp} and Observation \ref{obs:plus},
it must be the case that both $P \restrict{x_1 = a_1}$ and $P \restrict{x_2 = a_2}$ are multiplicatively separable.
Moreover, by the properties of $\bar{a}$, $\var(P \restrict{x_u = a_u}) = [n] \setminus \set{u}$, for $u=1,2$.
Hence, there exist $c_1, c_2 \in \F$ such that: 
$$P(\bar{x}) \restrict{x_1 = a_1} = h_1(\bar{x}_{L_1}) \cdot g_1(\bar{x}_{R_1}) + c_1$$
$$P(\bar{x}) \restrict{x_2 = a_2} = h_2(\bar{x}_{L_2}) \cdot g_2(\bar{x}_{R_2}) + c_2$$
where $u \in R_{3-u}$ and  $L_u \dot \cup R_u = [n] \setminus \set{u}$ for $u = 1,2$. 
We consider three sub-cases:
\\ \\ \textbf{Case $2a$:} $\size{R_1}, \size{R_2} \geq 2$. In this case we have that 
$(L_1, R_1 \setminus \set{2})$ and $(L_2, R_2 \setminus \set{1})$ are both non-trivial partitions of the set $[n] \setminus \set{1,2}$.
As $n \geq 4$, by Fact \ref{fact:pair} there exist $i \neq j $ such that $(i,j)$ is a crossing pair for both 
$(L_1, R_1 \setminus \set{2})$ and $(L_2, R_2 \setminus \set{1})$,
and hence for both $(L_1, R_1)$ and $(L_2, R_2)$.
In other words, both $P(\bar{x}) \restrict{x_1 = a_1}$  and $P(\bar{x}) \restrict{x_2 = a_2}$ are 
\dec{i}{j}{\F}
and consequently $\B{i,j}{\set{u}}(P) \restrict{x_u = a_u} = \B{i,j}{}(P \restrict{x_u = a_u}) \equiv 0$ for $u=1,2$.
By Proposition \ref{prop:B}, we get that $\B{i,j}{}(P) \equiv 0$.  As  $\Spar{P}{i}{j} \nequiv 0$ we conclude that 
$P$ is \dec{i}{j}{\F} 
and thus is multiplicatively separable.
\\ \\ \textbf{Case $2b$:} $\size{R_1} = 1$ (i.e. $R_1 = \set{2}$). We show that this sub-case reduces to the previous sub-case.
We have that $L_1 = [n] \setminus \set{1,2}$ and thus
$(2,i) \in G^{1}_P \subseteq G_P$ for $3 \leq i \leq n$. Now, pick $u \in L_2$ and $w \not \in \set{1,2,u}$. 
Recall that $1 \in R_2$ and note that $u \not \in \set{1,2}$. As $n \geq 4$, such values always exist. Moreover, we can assume WLOG that $w=3$ and $u=4$.   
Observe that $(1,4) \in G^{2}_P \subseteq G_P$, implying that $G^{3}_P$ is a connected graph. Repeating the reasoning of Case $2$ we can write:
$$P(\bar{x}) \restrict{x_3 = a_3} = h_3(\bar{x}_{L_3}) \cdot g_3(\bar{x}_{R_3}) + c_3$$
where $2 \in R_{3}$, $L_3 \dot \cup R_3 = [n] \setminus \set{3}$ and $c_3 \in \F$.
Now, if there exists $i \in L_3$ such that $4 \leq i \leq n$, then $(2,i)$ is a crossing pair for both $(L_1, R_1)$ and $(L_3, R_3)$.  
Otherwise, $R_3 = \set{2} \cup \condset{i}{4 \leq i \leq n}$ implying that $L_3 = \set{1}$ and hence $(1,4)$
is a crossing pair for both $(L_2, R_2)$ and $(L_3, R_3)$. 
Both outcomes reduce to Case $2a$.
\\ \\ \textbf{Case $2c$:} $\size{R_2} = 1$. Similar to Case $2b$.
\end{proof}

We now move to the proof of the main result.
As was suggested earlier, we would like to apply induction. In order to use induction, we need to ensure
that the crucial properties of $\phi(P)$ (i.e. Proposition \ref{prop:Gp}, \ref{prop:B} and \ref{prop:Sep ROP})
carry over throughout the inductive steps. To this end, we define the following``induction friendly'' version of $\phi(P)$.

\begin{definition}
\label{def:Phi}
$\Phi :\Fn \to \Fnxy$ \\
$$\Phi(P)(\bar{x},\bar{y}) \eqdef 
\prod \limits _{t \in [n]} \Fpar{P}{t}(\bar{x}) \cdot
\prod \limits _{i \neq j \in [n]} \Spar{P}{i}{j}(\bar{x}) \cdot 
\prod \limits _{J \subseteq [n] \setminus \set{i,j}, \size{J} = n-3}  \B{i,j}{J}(P)(\bar{x},\bar{y})$$
when the product is only on the corresponding \textbf{non-zero} multiplicands.
If all the corresponding multiplicands are identically zero we define $\Phi(P)(\bar{x},\bar{y}) \eqdef 1$.
\end{definition}

The following proposition shows that $\Phi(P)$ is indeed an ``induction friendly'' version of  $\phi(P)$
where each inductive step is reflected by fixing one variable at a time until we are left with three variables only. 

\begin{proposition}
\label{prop:Phi}
Let $P \in \Fn$ be a multilinear polynomial and let $\bar{a} \in \F^n$ be such that $\Phi(P)(\bar{a},\bar{y}) \nequiv 0 $.  
Let $I \subseteq [n]$ be of size  $\size{I} \leq n-4$. Then $\phi(P \restrict{\bar{x}_I = \bar{a}_I})(\bar{a},\bar{y}) \nequiv 0$.
\end{proposition}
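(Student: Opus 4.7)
Plan: I would prove the contrapositive. Assume $\phi(P')(\bar{a},\bar{y}) \equiv 0$, where $P' \eqdef P\restrict{\bar{x}_I = \bar{a}_I}$, and deduce $\Phi(P)(\bar{a},\bar{y}) \equiv 0$. Under this assumption $\phi(P') \nequiv 1$, so by Definition~\ref{def:phi} at least one of the (non-zero) multiplicands of $\phi(P')$ must itself vanish at $(\bar{a},\bar{y})$. The proof then splits into cases according to the type of this multiplicand.

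For the ``first-order'' multiplicand $\Fpar{P'}{t}(\bar{x})$ and the ``second-order'' multiplicand $\Spar{P'}{i}{j}(\bar{x})$, I would use that partial differentiation commutes with restricting disjoint variables to write $\Fpar{P'}{t} = \Fpar{P}{t}\restrict{\bar{x}_I = \bar{a}_I}$, and analogously for the second derivatives. Hence whenever such an expression is a non-zero multiplicand of $\phi(P')$, its unrestricted $P$-counterpart is a non-zero multiplicand of $\Phi(P)$, and evaluating at $\bar{x} = \bar{a}$ in either form yields the same field element, so the vanishing at $\bar{a}$ transfers directly to $\Phi(P)$.

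The main case is a multiplicand $\B{i,j}{\set{k}}(P')$ with distinct $i,j,k \in [n]\setminus I$. Iterating Property~2 of Lemma~\ref{lem:Bij prop} over the coordinates of $I$ yields the key identity $\B{i,j}{}(P') = \B{i,j}{I}(P)\restrict{\bar{x}_I = \bar{a}_I}$, and further substituting $y_k = x_k$ gives $\B{i,j}{\set{k}}(P') = \B{i,j}{I \cup \set{k}}(P)\restrict{\bar{x}_I = \bar{a}_I}$. Consequently $\B{i,j}{I \cup \set{k}}(P) \nequiv 0$ and $\B{i,j}{I \cup \set{k}}(P)(\bar{a},\bar{y}) \equiv 0$. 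Since $\size{I \cup \set{k}} \leq n-3$ and $I \cup \set{k} \subseteq [n]\setminus\set{i,j}$, Property~4 of Lemma~\ref{lem:Bij prop} supplies some $J \supseteq I \cup \set{k}$ with $\size{J} = n-3$, $J \subseteq [n]\setminus\set{i,j}$ and $\B{i,j}{J}(P) \nequiv 0$, so $\B{i,j}{J}(P)$ is one of the multiplicands of $\Phi(P)$. Because $J \supseteq I \cup \set{k}$, the polynomial $\B{i,j}{J}(P)(\bar{a},\bar{y})$ is obtained from $\B{i,j}{I \cup \set{k}}(P)(\bar{a},\bar{y})$ by additionally identifying $y_\ell = a_\ell$ for every $\ell \in J \setminus (I \cup \set{k})$, and so it is still $\equiv 0$. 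In every case a non-zero multiplicand of $\Phi(P)$ has been exhibited that vanishes at $(\bar{a},\bar{y})$, which forces $\Phi(P)(\bar{a},\bar{y}) \equiv 0$.

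The main obstacle is careful bookkeeping about which polynomial ring each object lives in: $\B{i,j}{\set{k}}(P')$ naturally sits in a ring with roughly $2(n-\size{I})-1$ variables, whereas the desired witness $\B{i,j}{J}(P)$ lives in the full $2n$-variable ring, and one must line up the $\bar{y}$-to-$\bar{x}$ identifications and the $\bar{x}_I = \bar{a}_I$ substitutions consistently. The conceptual crux is recognizing that Property~4 of Lemma~\ref{lem:Bij prop} is precisely the tool that promotes an identification set of arbitrary size $\leq n-3$ to the specific size $n-3$ demanded by $\Phi$; this is exactly what turns $\Phi$ into the ``induction friendly'' enhancement of $\phi$.
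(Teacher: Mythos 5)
Your proof is correct and follows essentially the same route as the paper, which simply argues the positive direction (every non-zero multiplicand of $\phi(P \restrict{\bar{x}_I = \bar{a}_I})$ survives at $\bar{a}$) using exactly your ingredients: partial derivatives commute with restricting disjoint variables, the lifting identity $\B{i,j}{\set{k}}(P \restrict{\bar{x}_I = \bar{a}_I}) = \B{i,j}{I \cup \set{k}}(P) \restrict{\bar{x}_I = \bar{a}_I}$, and Property 4 of Lemma \ref{lem:Bij prop} to promote $I \cup \set{k}$ to a set $J$ of size $n-3$ whose $\B{i,j}{J}(P)$ is a multiplicand of $\Phi(P)$. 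The only nit is that the product defining $\phi(P \restrict{\bar{x}_I = \bar{a}_I})$ formally also ranges over $k \in I$, where $\B{i,j}{\set{k}}(P \restrict{\bar{x}_I = \bar{a}_I}) = \B{i,j}{}(P \restrict{\bar{x}_I = \bar{a}_I})$ and which your case analysis omits, but your argument covers it verbatim since then $I \cup \set{k} = I$ still has size at most $n-3$.
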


\begin{proof}
$\phi$ contains three types of non-zero multiplicands.
We show that $\bar{a}$ is their common non-zero. 
Suppose that $\Spar{P \restrict{\bar{x}_I = \bar{a}_I}}{i}{j} \nequiv 0$. 
Then in particular  $\Spar{P}{i}{j} \nequiv 0$. By the definition of $\Phi$ we have that $\Spar{P}{i}{j}(\bar{a}) \neq 0$,
implying that  $\Spar{P \restrict{\bar{x}_I = \bar{a}_I}}{i}{j}(\bar{a}) \neq 0$.
Similar reasoning works for $\Fpar{P}{t} \nequiv 0$.
Now, suppose that $\B{i,j}{\set{k}}(P  \restrict{\bar{x}_I = \bar{a}_I}) \nequiv 0$,
which is equivalent to $\B{i,j}{I \cup \set{k}}(P) \restrict{\bar{x}_I = \bar{a}_I} \nequiv 0$
and hence implies $\B{i,j}{I \cup \set{k}}(P) \nequiv 0$.
As $\size{I \cup \set{k}} \leq n-3$ by Lemma \ref{lem:Bij prop} 
there exists $(I \cup \set{k}) \subseteq J \subseteq [n] \setminus \set{i,j}$ of size $\size{J} = n-3$ 
such that  $\B{i,j}{J}(P) \nequiv 0$.
By the definition of $\Phi$ we have that $\B{i,j}{J}(P)(\bar{a}) \neq 0$.
As $\B{i,j}{J}(P)$ is a restriction of $\B{i,j}{I \cup \set{k}}(P)$ we get that $\B{i,j}{I \cup \set{k}}(P)(\bar{a}) \neq 0$
as required.
\end{proof}

We can finally state our main theorem from which Theorems \ref{THM:Main1} and \ref{THM:Main2} follow as corollaries.

\begin{theorem}[Main]
\label{thm:Main Tech}
Let $P \in \Fn$ be a multilinear polynomial and let $\bar{a} \in \F^n$ be such that $\Phi(P)(\bar{a},\bar{y}) \nequiv 0 $.  
Then $P$ is a ROP iff for each $I \subseteq [n]$ of size $\size{I} = 3$ 
$P \restrict{\bar{x}_{[n] \setminus I} = \bar{a}_{[n] \setminus I}}$ is a ROP.
\end{theorem}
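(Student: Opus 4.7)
The forward direction is immediate: if $P$ is a read-once polynomial, then any restriction of $P$ obtained by fixing variables is itself a read-once polynomial (any ROF for $P$ remains a ROF after substituting field elements into some of its leaves). So the entire content of the theorem is the reverse direction.

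For the reverse direction, the plan is to perform downward induction on the size of the restriction set. Concretely, I would prove the following stronger claim: for every $I \subseteq [n]$ with $\size{I} \leq n-3$, the polynomial $P \restrict{\bar{x}_I = \bar{a}_I}$ is a ROP. Taking $I = \emptyset$ at the end yields the theorem.

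The base case is $\size{I} = n-3$, in which $P \restrict{\bar{x}_I = \bar{a}_I}$ is a trivariate polynomial; this is exactly the hypothesis of the theorem. For the inductive step, fix $I$ with $\size{I} = k \leq n - 4$ and assume the claim holds for every set of size $k+1$. Write $Q \eqdef P \restrict{\bar{x}_I = \bar{a}_I}$, a polynomial on $n - k \geq 4$ variables. Since $\size{I} \leq n-4$, Proposition \ref{prop:Phi} delivers $\phi(Q)(\bar{a},\bar{y}) \nequiv 0$, which is exactly the hypothesis needed to apply Lemma \ref{lem:main} to $Q$. Lemma \ref{lem:main} then tells us that $Q$ is a ROP iff $Q \restrict{x_k = a_k}$ is a ROP for every $k \in [n] \setminus I$; but each such further restriction equals $P \restrict{\bar{x}_{I \cup \{k\}} = \bar{a}_{I \cup \{k\}}}$, which is a ROP by the inductive hypothesis applied to the set $I \cup \{k\}$ of size $k+1$. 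Hence $Q$ is a ROP, closing the induction.

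The only nontrivial point is making sure the Lemma \ref{lem:main} hypothesis propagates down the induction, and this is precisely what the ``induction-friendly'' definition of $\Phi$ together with Proposition \ref{prop:Phi} was designed to achieve: the multiplicands of $\Phi$ use the full index sets $J$ of size $n-3$ rather than singletons, so that after restricting to any $I$ of size $\leq n-4$ the non-vanishing is transferred, via part~4 of Lemma \ref{lem:Bij prop}, to the singleton multiplicands appearing in $\phi$ of the restricted polynomial. Hence there is no genuine obstacle beyond bookkeeping, and the whole argument reduces to iterating Lemma \ref{lem:main} along a chain $\emptyset \subsetneq \{k_1\} \subsetneq \{k_1, k_2\} \subsetneq \cdots$ until we reach restriction sets of size $n-3$.
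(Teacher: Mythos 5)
Your proposal is correct and is essentially the paper's own argument: the paper runs the same induction (indexed by the set $S$ of free variables rather than the complementary set $I$ of fixed ones), with trivariate restrictions as the base case, Proposition \ref{prop:Phi} transferring the non-vanishing of $\Phi(P)(\bar{a},\bar{y})$ to $\phi$ of each restriction, and Lemma \ref{lem:main} serving as the inductive step. The difference is purely one of bookkeeping notation.
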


\begin{proof}
For $S \subseteq [n]$, we define $Q_S \eqdef P \restrict{\bar{x}_{[n] \setminus S} = \bar{a}_{[n] \setminus S}}$.
We prove that $Q_S$ is a ROP when $\size{S} \geq 3$ by induction on $\size{S}$.
The base case $\size{S} = 3$ corresponds to the conditions of the theorem. Now suppose that $\size{S} \geq 4$.
Pick $k \in S$. We have that $Q_{S} \restrict{x_k = a_k} = Q_{S \setminus \set{k}}$ and
thus $Q_{S} \restrict{x_k = a_k}$ is a ROP by the induction hypothesis.
By Proposition \ref{prop:Phi}, $\phi(Q_S)(\bar{a},\bar{y}) \nequiv 0$. 
Given this, Lemma \ref{lem:main} implies that $Q_{S}$ is a ROP to begin with.
To finish the proof, observe that $Q_{[n]} = P$.
\end{proof}

We now turn to the proofs of Theorems \ref{THM:Main1} and \ref{THM:Main2}. 

\begin{proof}[Proof of Theorem \ref{THM:Main1}]
The first direction is trivial. A restriction of a read-once polynomial is itself a read-once polynomial.
For the other direction, observe that the individual degree of each $x_i$ in $\Phi(P)(\bar{x},\bar{y})$ is less than $1.5n^3$.
As $\Phi(P)(\bar{x},\bar{y}) \nequiv 0$ by Lemma \ref{lem:vanish}, there exists $\bar{a} \in \F^n$ such that
$\Phi(P)(\bar{a},\bar{y}) \nequiv 0$. By the main theorem, $P$ is a read-once polynomial. 
\end{proof}

\begin{proof}[Proof of Theorem \ref{THM:Main2}] 
Let us view $\Phi(P)(\bar{x},\bar{y})$ as a polynomial over ${\mathbb{F}(y_1, y_2, \ldots, y_n)[x_1,x_2,\ldots ,x_{n} ]}$.
Given this and by the Schwartz-Zippel Lemma (Lemma \ref{lem:sz}), 
$\Pr_{\; \bar{a} \in \F^n \;}[\Phi(P)(\bar{a},\bar{y}) \equiv 0] \leq \frac{1.5n^4}{\size{\F}} < \varepsilon $ which implies 
that there exists $\bar{a} \in \F^n$ such that $\Phi(P)(\bar{a},\bar{y}) \nequiv 0$ and we are done.
\end{proof}


\section{Applications}
\label{sec:apps}

In this section we give two applications of our results.
The first application is a property testing algorithm for read-once polynomials.
The second application is an efficient algorithm  for the \emph{read-once
testing problem} (see below). The key difference between the problems is that in the first
case we need to test whether or not a given function is close (in the Hamming distance) to a function representable by a read-once polynomial.
While in second case, we need to determine whether a given polynomial equals to a read-once polynomial as a formal sum of monomials.
For example, the $x^2-x$ represent a function computable by a read-once polynomial over the field with two elements,
while from the formal point of view, its not even a multilinear polynomial.
We note that for polynomials over sufficiently large fields there is no difference between the functional and the formal equalities.

\subsection{Property Testing for Read-Once Polynomials}

A property tester for a property $\Prop$ is a procedure that given oracle access to a function $f : \F^n \to \F$ 
tests if $f$ represents a function from $\Prop$ or $f$ is ``far'' for any such function.
In this section we construct a property tester for read-once polynomials thus proving Theorem \ref{THM:Main3}.
We build on the property tester for multilinear polynomials of Feige et al. \cite{FGLSS91}.
The following definitions are from \cite{FGLSS91} or slight modifications of them:

\begin{definition}[Aligned Triples]
We call a set of three distinct points $\set{\bar{\alpha},\bar{\beta},\bar{\gamma}} \subseteq \F^n$ 
an \emph{aligned triple} if there exists a coordinate $i \in [n]$ such that
they differ only on the $i$-th coordinate.
Let $f : \F^n \to \F$ be a function.
Define $\tilde{f}(x_i) \eqdef f \restrict{\xb_{[n] \setminus \set{i}} \; = \; \bar{\alpha}_{[n] \setminus \set{i}}}$.
We say that the aligned triple $\set{\bar{\alpha},\bar{\beta},\bar{\gamma}}$
is \emph{$f$-linear} if the univariate interpolating polynomial of $\tilde{f}(x_i)$ over the set $\set{\alpha_i, \beta_i, \gamma_i}$
is of a degree at most $1$ in $x_i$.
Finally, we denote by $\tau(f)$ the probability that a random aligned triple is not $f$-linear. Formally: 

\begin{equation*}
\tau(f) \eqdef \Pr_{\text{aligned triple } \set{\bar{\alpha},\bar{\beta},\bar{\gamma}} \subseteq \F^n}
\left[ \set{\bar{\alpha},\bar{\beta},\bar{\gamma}} \text{ is not $f$-linear} \right].
\end{equation*}
\end{definition}

Given this terminology, we can now state the result of Feige et al. that gives rise to a property tester for multilinear polynomials.

\begin{lemma}[Theorem 9 of \cite{FGLSS91} reformulated]
\label{lem:ML test}
Let $n \geq 1, \delta > 0$ and $\F$ be a field of size $\size{\F} > 12n / \delta + 2$.
Let $f : \F^n \to \F$ be an arbitrary function. If $f$ is $\delta$-far from any multilinear polynomial over $\F$ 
then $\tau(f) \geq \delta/30n$.
\end{lemma}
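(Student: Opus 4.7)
The plan is to prove the contrapositive: if $\tau(f) < \delta/30n$, then there exists a multilinear polynomial $g$ over $\F$ with $\Delta(f,g) \leq \delta$. To produce such a $g$, I would apply the standard \emph{self-correction} paradigm. For a point $\bar{x} \in \F^n$ and coordinate $i \in [n]$, define a local correction in direction $i$ as follows: sample uniformly random $\alpha, \beta \in \F$; form the unique polynomial $\ell_{\alpha,\beta}$ of degree at most $1$ in $x_i$ that interpolates the univariate function obtained by freezing all other coordinates of $\bar{x}$ at the values $x_i=\alpha$ and $x_i=\beta$; and evaluate $\ell_{\alpha,\beta}$ at the original $x_i$. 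The corrected value is the plurality of $\ell_{\alpha,\beta}(x_i)$ over the random choice of $(\alpha, \beta)$. Set $g(\bar{x})$ to be the result of iterating this across all $n$ coordinates.

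First I would decompose $\tau(f) = \frac{1}{n}\sum_{i=1}^n \tau_i(f)$, where $\tau_i(f)$ denotes the probability that a random aligned triple varying only in coordinate $i$ fails to be $f$-linear. By Markov's inequality, most coordinates $i$ satisfy $\tau_i(f) = O(\tau(f))$. For each such $i$, a Rubinfeld--Sudan style argument shows that (a) the self-correction in coordinate $i$ disagrees with $f$ on at most an $O(\tau_i(f))$-fraction of points, and (b) the corrected function is linear in $x_i$ along every axis-parallel line in the $i$-direction. The hypothesis $\size{\F} > 12n/\delta + 2$ is precisely what guarantees that the plurality returned by the self-correction is well-defined (a strict majority exists) and is stable under repeated sampling.

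Combining these ingredients, a union bound over coordinates yields $\Delta(f, g) \leq \sum_{i=1}^n O(\tau_i(f)) = O(n \cdot \tau(f)) < \delta$ once the constants are chosen to fit the $1/(30n)$ slack. The main obstacle is verifying that the per-coordinate corrections compose into a \emph{single} multilinear function: correcting coordinate $j$ must not destroy linearity in coordinate $i$ that was already established. This coordinate-consistency step is the delicate heart of the argument and it is what forces the specific $1/(30n)$ constant appearing in the statement; it is handled by showing that the self-corrections in different directions agree on a global plurality value, invoking the field-size hypothesis once more. Once $g$ is shown to be multilinear with $\Delta(f, g) \leq \delta$, the contrapositive---and hence the lemma---follows.
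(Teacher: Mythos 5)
First, note what you are measuring yourself against: the paper offers no proof of this lemma at all --- it is imported as a reformulation of Theorem~9 of \cite{FGLSS91}, so proving it means essentially reproducing the Feige--Goldwasser--Lov\'asz--Safra--Szegedy multilinearity-test analysis. Your proposal is a reasonable plan for such a proof, but as written it has a genuine gap: it defers precisely the step that constitutes the theorem. You define $g$ by ``iterating'' a plurality-based self-correction over all $n$ coordinates, but you never show that this object is well-defined (that the order of correction does not matter), that it is a polynomial that is multilinear in \emph{all} coordinates simultaneously, or that it stays within distance $\delta$ of $f$. You yourself flag the coordinate-consistency issue as ``the delicate heart of the argument'' and then assert it ``is handled by showing that the self-corrections in different directions agree,'' with no argument. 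That is not a minor omission: in the actual analysis of \cite{FGLSS91} this is where all the work lies, and it is carried out by an induction on the number of variables (restricting to hyperplanes $x_i=\alpha$ and gluing the multilinear approximations obtained on them), not by a single global plurality correction plus a union bound.

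The quantitative side is also unsubstantiated. The statement ties three quantities together --- the threshold $\tau(f)\ge \delta/30n$, the distance $\delta$, and the field bound $\size{\F} > 12n/\delta + 2$ --- and these constants emerge from the specific inductive bookkeeping in \cite{FGLSS91}. Your sketch replaces this with ``$\Delta(f,g)\le \sum_i O(\tau_i(f)) = O(n\cdot\tau(f)) < \delta$ once the constants are chosen,'' and with the unsupported claim that the field-size hypothesis is ``precisely what guarantees that the plurality is well-defined.'' Neither assertion is derived; in particular a per-coordinate Rubinfeld--Sudan correction argument gives constants only after one proves the agreement/majority lemmas in the relevant parameter regime, which again is the missing content. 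So the proposal should be regarded as an outline of a possible alternative route, not a proof; to make it complete you would either have to carry out the hyperplane-induction argument of \cite{FGLSS91} (the source the paper cites) or fully execute the self-correction program, including well-definedness of $g$, its multilinearity in every variable, and the explicit constant-tracking that matches $\delta/30n$ and $\size{\F} > 12n/\delta + 2$.
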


In other words, it is sufficient to test multilinearity for random triples of points differing only on one coordinate.
We show that for the case of read-once polynomials, it is sufficient to test the property 
for random triples of points differing only on \emph{three} coordinates. 

\begin{algorithm} 
\label{alg:property ROT}
    \KwIn{$n \geq 1, \delta > 0$, oracle access to $f : \F^n \to \F$.}
    \KwOut{``yes'' if $f$ represents a ROP, ``no'' if $f$ is $\delta$-far from any ROP.} 

    Pick $\ab, \bb, \cb \in \F^n$ at random without repetitions (that is, $a_i \neq b_i \neq c_i$) \;
	 \ForEach{$I \subseteq [n]$ of size $\size{I} = 3$} {
        Set $\tilde{f}(\xb_{I}) \eqdef f \restrict{\xb_{[n] \setminus I} \; = \; \ab_{[n] \setminus I}}$ \;
        Set $S_I \eqdef \prod _{i \in I}  \set{a_i, b_i, c_i}$ ($3 \times 3$ Cartesian product) \;
        Compute $\tilde{P}(\xb_{I})$ - the three-variate interpolating polynomial of $\tilde{f}(\xb_{I})$ over the set $S_I$ \;  
   	\If{$\tilde{P}$ is not a multilinear polynomial}
    {Output ``no'' \;} 
     \Else{Check if $\tilde{P}$ is a ROP using Lemma \ref{lem:3ROP} \; }
	 \label{Lin10}
	}
    Output ``yes'' iff $f$ passes all the tests \;

\caption{Property Tester for Read-Once Polynomials}
\end{algorithm}

\begin{lemma}
\label{lem:property ROT}
Let $n \geq 1,\delta > 0$ and $\F$ be a field of size $\size{\F} > 24 \cdot \max \set{\frac{n}{\delta},n^5}$.
Given oracle access to a function, $f : \F^n \to \F$
Algorithm \ref{alg:property ROT} runs in time $\poly(n,1/\delta)$ and outputs ``yes'' if $f$ represents a read-once polynomial.
Otherwise, if $f$ is $\delta$-far from any read-once polynomial, the algorithm outputs ``no'' with probability at least 
$1 - \exp(-\delta-\frac{1}{n^4})$.
\end{lemma}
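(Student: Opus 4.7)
My plan is a two-case soundness analysis, combining the multilinearity tester of Feige et al.\ (Lemma \ref{lem:ML test}) with our local-to-global characterization (Theorem \ref{THM:Main2}); the running time and completeness are essentially immediate. The algorithm makes $27\binom{n}{3}=O(n^3)$ queries and performs $O(1)$ work per triple (a three-variate interpolation on the $3\times 3\times 3$ grid $S_I$, a constant-size multilinearity check, and one call to Lemma \ref{lem:3ROP}), so the running time is $\poly(n)$. For completeness, if $f$ is a ROP $P$ then for every $\ab$ and every $I$ the restriction $P \restrict{\xb_{[n]\setminus I} = \ab_{[n]\setminus I}}$ is a multilinear ROP of individual degree at most one in each of the three variables of $I$; it is therefore the unique interpolant $\tilde{P}$ on $S_I$, and Lemma \ref{lem:3ROP} accepts it.

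For soundness, assume $f$ is $\delta$-far from any ROP and split on the distance of $f$ to the class of multilinear polynomials. In Case A, $f$ is $\delta/2$-far from every multilinear polynomial; Lemma \ref{lem:ML test} then gives $\tau(f)\geq \delta/(60n)$. The multilinearity step of the algorithm on a triple $I$ is equivalent to checking $f$-linearity on each of the $27$ aligned triples contained in $S_I$. Over the random draw of $\ab,\bb,\cb$, a uniformly chosen such test (sample $I$ uniformly, then a varying coordinate $i_0\in I$, then for each $j\in I\setminus\set{i_0}$ a value from $\set{a_j,b_j,c_j}$) has the marginal distribution of a uniform aligned triple in $\F^n$---each non-varying coordinate is a uniform field element, the three values on the varying coordinate form a uniform ordered triple of distinct elements, and these choices are independent across coordinates by the independence of $\ab,\bb,\cb$. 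Hence the rejection probability in this case is at least $\tau(f)\geq \delta/(60n)$.

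In Case B, $f$ is $\delta/2$-close to some multilinear $P$; then by the triangle inequality $P$ is itself $\delta/2$-far from every ROP, so $P$ is not a ROP. Applying Theorem \ref{THM:Main2} with $\varepsilon := 1.5n^4/\size{\F}\leq 1/(16n)$, for at least a $1-\varepsilon$ fraction of $\ab$ there exists some triple $I^*=I^*(\ab)$ for which $P \restrict{\xb_{[n]\setminus I^*} = \ab_{[n]\setminus I^*}}$ is not a ROP. Whenever $f$ and $P$ additionally agree on all $27$ points of $S_{I^*}$, the interpolation $\tilde{P}$ on $I^*$ equals this multilinear non-ROP and Lemma \ref{lem:3ROP} rejects. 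Each point of $S_{I^*(\ab)}$ is marginally uniform in $\F^n$ (the coordinates of $\ab,\bb,\cb$ are drawn independently, and each point of $S_I$ uses one of their coordinates in each slot), so a union bound yields $\Pr[f=P\text{ on }S_{I^*}]\geq 1-27\delta/2$, giving rejection probability at least $1-\varepsilon-27\delta/2$.

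Combining the two case bounds and plugging in the field-size hypothesis $\size{\F}>24\max(n/\delta,n^5)$ should simplify to the claimed $1-\exp(-\delta - 1/n^4)$. The main obstacle I anticipate is in Case A: the bare bound $\tau(f)\geq\delta/(60n)$ only delivers per-run rejection of order $\delta/n$, whereas matching the stated target requires exploiting the fact that a single run performs $\Theta(n^3)$ aligned-triple tests---most cleanly via a Paley--Zygmund/second-moment argument on the count of failing tests, using that pairs of aligned triples with disjoint varying coordinates contribute essentially independent randomness because $\ab,\bb,\cb$ randomize each coordinate independently. The remaining ingredients (the Case B bound, the field-size arithmetic, and completeness) are routine.
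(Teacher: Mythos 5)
There are genuine gaps in your soundness analysis, and the missing ingredient relative to the paper is the choice of the case-split threshold: the paper splits at $\delta' \eqdef \min\set{\delta/2,\,1/n^4}$, not at $\delta/2$. Note first that the stated per-run guarantee is modest: $1-\exp(-\delta-1/n^4)$ is roughly $\delta+1/n^4$ (amplification to $3/4$ is done afterwards, in Theorem \ref{THM:Main3}, by repeating the algorithm), so no Paley--Zygmund/second-moment machinery is intended. Still, your Case A is incomplete by your own admission, and the repair you sketch rests on a false premise: two aligned triples coming from different sets $I$ are \emph{not} ``essentially independent'' even when their varying coordinates are disjoint, because every query of the run fixes all untouched coordinates to the \emph{same} base point $\ab$; for a worst-case $f$ the corresponding $f$-linearity events can be heavily correlated. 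The paper's route is simply that a run which accepts has in particular certified $n$ aligned triples to be $f$-linear, and it bounds the accept probability by $(1-\delta'/30n)^n \le \exp(-\delta'/30)$ --- a batch of $n$ tests, not one test and not $\Theta(n^3)$ of them.

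Your Case B also does not go through as written. With the split at $\delta/2$ the union bound $27\delta/2$ is vacuous once $\delta$ is a constant; moreover, since $I^*=I^*(\ab)$ is chosen as a function of $\ab$, marginal uniformity of each point of $S_{I^*}$ does not justify the union bound (conditioned on the value of $I^*$, the coordinates $\ab_{[n]\setminus I^*}$ are no longer uniform). The paper avoids both issues at once: because $\Delta(f,P)\le \delta' \le 1/n^4$, it union bounds over \emph{all} $\BigO(n^3)$ query points of the run --- for each \emph{fixed} $I$ every point of $S_I$ is exactly uniform in $\F^n$ --- so $f$ and $P$ agree on every queried point with probability $1-\BigO(1/n)$, and then Theorem \ref{THM:Main2} applied to the multilinear non-ROP $P$ bounds the accept probability by $\BigO(1/n)\le \exp(-\delta-1/n^4)$. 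Your running-time and completeness arguments, and the marginal-distribution claim for a single aligned triple in Case A, are fine and match the paper.
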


\begin{proof}
The claim regarding the running time is immediate from the description of the algorithm.
Observe that the test in Line \ref{Lin10} is actually an identity test for a quadratic univariate polynomial.
For the correctness, clearly, if $f$ represents a ROP then it passes all the tests.
Suppose that $f$ is $\delta$-far from any ROP. 
Set $\delta' \eqdef \min \set{ \frac{\delta}{2} , \frac{1}{n^4}}$. We divide our analysis into two cases: \\ \\
\textbf{Case $1$:} $f$ is $\delta'$-far from any multilinear polynomial over $\F$.
Observe that if $f$ passes all the tests, then the algorithm encounters at least $n$ random $f$-linear aligned triples.
By Lemma \ref{lem:ML test} the probability of the event is at most $(1 - \delta'/30n)^{n} \leq \exp(-\delta') \leq \exp(-\delta - \frac{1}{n^4})$.
\\ \\ \textbf{Case $2$:} There exists a multilinear polynomial $P \in \Fn$ such that $\Delta(f,P) \leq \delta' \leq \delta/2$. 
Since $f$ is $\delta$-far from any ROP, $P$ is not a ROP.
We claim that this case affectively reduces to Theorem \ref{THM:Main2}. Intuitively, if we executed the algorithm on $P$ instead
of $f$, Theorem \ref{THM:Main2} would guarantee small failure probability. On the other hand, $f$ and $P$ are very close
and we only query $f$ on a small set of random points, so with high probability we will actually see the values of $P$.
Formally, consider a single iteration $k$. Let us denote by $E_{pass}$ the event that $f$ passes all the tests
and by $E_{eq}$ the event that $f$ and $P$ are equal on all the $\BigO(n^3)$ query points in this iteration. 
We have that:
\begin{equation*}
\Pr[E_{pass}] \leq \Pr[E_{pass} \; | \; E_{eq}] + \Pr[\bar{E}_{eq}] \leq \BigO(\frac{1}{n}) + \BigO(\frac{1}{n}) = \BigO(\frac{1}{n})
\leq \exp(-\delta-\frac{1}{n^4}).
\end{equation*}
The upper bounds on the terms 
follow from Theorem \ref{THM:Main2} and the fact that $\Delta(f,P) \leq \frac{1}{n^4}$, respectively.
\end{proof}

Theorem \ref{THM:Main3} follows as a corollary of the lemma by repeating
the algorithm $\BigO(\frac{1}{\delta+1/n^4})$ times.

\subsection{Read-Once Testing}

The second application is for read-once testing. This problem was first defined and studied in  \cite{ShpilkaVolkovich14}.

\begin{problem}[Problem $1.1$ in \cite{ShpilkaVolkovich14}]
\label{prob:int:REOFT} Given oracle access to a polynomial $P$, decide if $P$ is a read-once 
polynomial, and if the answer is positive output a read-once formula for it.
\end{problem}

The original formulation of the problem is actually more general.
Here we focus on randomized algorithms for the problem.
As such, it is sufficient to solve only the decision part of the problem 
as there already exists an efficient randomized algorithm for the reconstruction part.

\begin{lemma}[Theorem $3$ in \cite{ShpilkaVolkovich14}. Instantiation for the case $d=1$.] 
There is a polynomial-time randomized algorithm that
given oracle access to a read-one formula $\psi$
on $n$ variables, reconstructs $\psi$ with high probability.
If  $\size{\F} \leq  4n^2$, then the algorithm may make queries from an extension field of $\F$ of size larger than $4n^2$.
\end{lemma}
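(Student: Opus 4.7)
The plan is to reconstruct the read-once formula $\psi$ by a top-down recursive decomposition driven by oracle queries to low-order partial derivatives. Because $\psi$ is multilinear (the $d=1$ case), the values $\Fpar{\psi}{i}(\bar{a})$, $\Spar{\psi}{i}{j}(\bar{a})$ and $\comm{i}{j}\psi(\bar{a})$ can each be computed from a constant number of oracle calls to $\psi$. A field of size $> 4n^2$ lets all Schwartz-Zippel identity tests succeed with high probability (the relevant test polynomials have individual degree at most $n-1$); when $\size{\F}$ is too small, we simply run the evaluations in a sufficiently large extension of $\F$, exactly as the statement allows.

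The recursion has two cases, governed by the gate graph $G_\psi$, which is built by testing $\Spar{\psi}{i}{j} \nequiv 0$ at $O(1)$ random points for each of the $O(n^2)$ pairs in $\var(\psi)$. If $G_\psi$ is disconnected, Observation \ref{obs:plus} yields an additive split $\psi = \sum_\ell \psi_\ell$ whose summands are supported on the connected components $C_\ell$; each $\psi_\ell$ is recovered up to an additive constant by restricting the remaining variables to a generic assignment $\bar{a}$ (with $\psi(\bar{a})$ absorbed into one of the summands), and the algorithm recurses on each $\psi_\ell$. If instead $G_\psi$ is connected, the ROF Structural Lemma forces $\psi = h \cdot g + c$ for variable-disjoint non-constant ROPs $h,g$. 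Pick any $(i,j)$ with $\Spar{\psi}{i}{j} \nequiv 0$; by Lemma \ref{lem:commutator} the constant $c$ equals $\comm{i}{j}\psi(\bar{a})/\Spar{\psi}{i}{j}(\bar{a})$ at any $\bar{a}$ where the denominator is non-zero. With $c$ in hand, the bipartition $(\var(h),\var(g))$ is recovered by, for each $k \notin \set{i,j}$, commutator-testing whether $\psi-c$ is $(i,k)$- or $(j,k)$-decomposable, placing $k$ on the side opposite the tested variable. Recurse on $h$ and $g$.

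The main obstacle is ensuring that the multiplicative split extracted is the \emph{top-level} one: a ROP may be $(i,j)$-decomposable at several nested levels simultaneously, so a naive commutator test could in principle capture an inner rather than the outermost split. This is handled by taking the coarsest bipartition consistent with all pairwise commutator tests, equivalently by partitioning the vertices of the connected $G_\psi$ according to $(i,k)$-decomposability for a single fixed pivot $i$; this partition is precisely $(\var(h), \var(g))$, and recursion then peels off the next level inside each side. The recursion depth is at most $n$, each level performs $\poly(n)$ oracle queries, and each Schwartz-Zippel event succeeds with probability $1-O(n/\size{\F})$, so a union bound over the $\poly(n)$ tests gives overall correctness with high probability and the promised $\poly(n)$ query complexity.
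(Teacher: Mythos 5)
The paper does not prove this lemma at all: it is quoted as Theorem $3$ of \cite{ShpilkaVolkovich14} (specialized to $d=1$) and used as a black box in the read-once testing application, so there is no in-paper argument to measure your sketch against. Taken as an independent proof attempt of the cited theorem, your outline --- gate graph to detect additive gates, commutator tests to handle multiplicative gates, recursion on the parts --- is the right general strategy and matches the known reconstruction algorithms in spirit.

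As a proof, however, it has genuine gaps, concentrated exactly where the difficulty of the theorem lies. First, the formula $c=\comm{i}{j}\psi(\ab)\,/\,\Spar{\psi}{i}{j}(\ab)$ is only meaningful when $(i,j)$ is a crossing pair of an actual decomposition; for a generic edge of $G_\psi$, e.g. $\psi=(x_1x_2+x_3)(x_4+x_5)+c$ with $(i,j)=(1,2)$, the polynomial $\comm{1}{2}\psi$ is not a constant multiple of $\Spar{\psi}{1}{2}$, so the ratio depends on $\ab$ and returns garbage. You acknowledge this, but your repair --- partition $\var(\psi)$ by testing, for a fixed pivot $i$, which $k$ make $\psi$ \dec{i}{k}{\F} --- rests on two unproved claims: that all variables passing the test can be separated from $i$ \emph{simultaneously}, i.e.\ that the resulting bipartition is realized by a single variable-disjoint factorization $\psi=u\cdot v+c$, and that all passing pairs determine one and the same constant $c$. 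Nested multiplication gates with zero inner constants make many non-crossing pairs pass the test, so neither claim is automatic, and establishing them (together with the claimed equivalence to the ``coarsest consistent bipartition'') is essentially the technical content of the theorem you are proving. Second, the probability analysis does not close: with $\size{\F}$ only of order $n^2$, each one-sided identity test fails with probability $\Theta(1/n)$, and the recursion performs up to order $n^3$ such tests, so the stated union bound exceeds one; you must amplify each test by independent repetitions (cheap, but it has to be said) or certify one global random assignment via a product polynomial, in the way this paper does with $\Phi$. Finally, when recursing on a factor you only have oracle access to an unknown scalar multiple of it (you cannot evaluate the complementary factor at $\ab$ on its own); this is harmless but the bookkeeping of constants should appear. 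None of these is obviously fatal, but as written the sketch is not yet a proof of the cited result.
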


The randomized algorithm of \cite{ShpilkaVolkovich14} operates as follows.
Given oracle access to a polynomial $P$ run the reconstruction algorithm to get a candidate ROF $\psi$. 
If the reconstruction algorithm fails, we conclude that $P$ was not a ROP to begin with.
Otherwise, invoke the Schwartz-Zippel Lemma (Lemma \ref{lem:sz})
to check whether $\psi$ indeed computes $P$. This results in a two-sided error algorithm.
On one hand, given a ROP $P$ as an input the reconstruction algorithm may output a wrong ROF $\psi'$. On the other hand,
given a non ROP input $P$ the reconstruction algorithm may still output some ROF $\psi$ and there is a small chance that the Schwartz-Zippel algorithm 
will answer `yes' although there is no equality. We now give a simpler, one-sided error algorithm for the problem.

\begin{algorithm} 
\label{alg:ROT}
    \KwIn{$n,d \geq 1,\varepsilon > 0$, oracle access to $P \in \Fn$ of degree at most $d$.}
    \KwOut{``yes'' if $P$ is a ROP, ``no'' otherwise.}

    Pick $\ab \in \F^n$ at random \;
	 \ForEach{$I \subseteq [n]$ of size $\size{I} = 3$} {
        Set $P' \eqdef P \restrict{\xb_{[n] \setminus I} \; = \; \ab_{[n] \setminus I}}$ (interpolate are a trivariate polynomial of degree $d$ over $V$) \;
   	\If{$P'$ is not a multilinear polynomial}
    {Output ``no'' \;} 
    	\label{L5}
     \Else{Check if $P'$ is a ROP using Lemma \ref{lem:3ROP} \; }
     \label{L8}
	}
    Output ``yes'' iff $P$ passes all the tests \;

\caption{Read-Once Testing}
\end{algorithm}

\begin{lemma}
\label{lem:ROT}
Let $n,d \geq 1,\varepsilon > 0$ and $\F$ be a field of size $\size{\F} \geq \max \set{1.5n^4,d} / \varepsilon$. 
Given oracle access to a polynomial $P \in \Fn$ of a degree at most $d$
Algorithm \ref{alg:ROT} runs in time $\poly(n,d,1/ log(\varepsilon))$ and outputs ``yes'' if $P$ is a read-once polynomial.
Otherwise, the algorithm outputs ``no'' with probability at least $1-\varepsilon$.
\end{lemma}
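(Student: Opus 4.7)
The running time claim is immediate: I would iterate over the $\binom{n}{3} = \BigO(n^3)$ triples $I$, each time interpolating a trivariate polynomial of degree at most $d$ from $\BigO(d^3)$ oracle queries (using a fixed $V \subseteq \F$ with $|V| > d$, which exists by the field-size hypothesis), then testing multilinearity and invoking Lemma \ref{lem:3ROP}; each step is $\poly(n,d)$. Completeness is also immediate: if $P$ is a ROP then every restriction $P'$ is a multilinear ROP, so the multilinearity test passes and Lemma \ref{lem:3ROP} certifies each $P'$, whence the algorithm outputs ``yes''.

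For soundness, I would split based on whether $P$ is multilinear as a formal polynomial. \emph{Case 1:} $P$ is not multilinear. Pick $i \in [n]$ and $e \geq 2$ such that the coefficient $P_e \in \F[\xb_{[n] \setminus \{i\}}]$ of $x_i^e$ in $P$ is nonzero; note $\deg(P_e) \leq d$. By Schwartz--Zippel (Lemma \ref{lem:sz}), $\Pr_\ab[P_e(\ab_{[n] \setminus \{i\}}) = 0] \leq d/|\F|$, so with probability at least $1 - d/|\F|$ we have $P_e(\ab_{[n] \setminus \{i\}}) \neq 0$. For any $I \ni i$, the coefficient of $x_i^e$ in $P' = P\restrict{\xb_{[n] \setminus I} = \ab_{[n] \setminus I}}$ is the polynomial $P_e(\xb_{I \setminus \{i\}}, \ab_{[n] \setminus I})$, which evaluates to the nonzero element $P_e(\ab_{[n] \setminus \{i\}})$ at $\xb_{I \setminus \{i\}} = \ab_{I \setminus \{i\}}$ and is therefore not identically zero. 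Hence $P'$ is not multilinear and the algorithm outputs ``no'' on this $I$.

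\emph{Case 2:} $P$ is multilinear but not a ROP. This is exactly what Theorem \ref{THM:Main2} was set up for: since $|\F| > 1.5 n^4 / \varepsilon$, strictly less than an $\varepsilon$ fraction of $\ab \in \F^n$ render $P$ \lROP. Hence with probability at least $1 - \varepsilon$ there is some $I$ of size $3$ for which $P' = P\restrict{\xb_{[n] \setminus I} = \ab_{[n] \setminus I}}$ is not a ROP; since $P$ is multilinear so is $P'$, the multilinearity check passes, and Lemma \ref{lem:3ROP} correctly rejects $P'$, yielding the output ``no''.

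The two cases are mutually exclusive, so the overall failure probability is bounded by $\max\{d,\, 1.5 n^4\} / |\F| \leq \varepsilon$ under the field-size hypothesis. There is no real obstacle; the one thing that requires care is keeping the two cases separate so that their error bounds combine via maximum rather than sum, which is precisely what makes the hypothesis $|\F| \geq \max\{1.5n^4, d\}/\varepsilon$ sufficient rather than needing to double it.
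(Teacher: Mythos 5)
Your proposal is correct and follows essentially the same route as the paper: running time and completeness are immediate, and soundness is split into the non-multilinear case (handled by Schwartz--Zippel applied to a nonzero coefficient of $x_i^e$, $e \geq 2$ --- the paper uses the leading coefficient $Q$ in $P = Qx_i^e + R$, you use any nonzero $P_e$, which is equivalent) and the multilinear non-ROP case (handled by Theorem \ref{THM:Main2} together with Lemma \ref{lem:3ROP}). The observation that the two cases are mutually exclusive, so the error bounds combine via a maximum under the hypothesis $\size{\F} \geq \max\set{1.5n^4, d}/\varepsilon$, matches the paper's implicit accounting.
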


\begin{proof}
The claim regarding the running time is immediate from the description of the algorithm.
Observe that the test in Line \ref{L8} is actually an identity test for a degree $2$ univariate polynomial.
For the correctness, first note that if $P$ is a ROP then it passes all the test for every $\ab \in \F^n$.
Now suppose that $P$ is not a ROP. As previously, we divide our analysis into two cases. \\ 
\textbf{Case $1$:} $P$ is not a multilinear polynomial.
Then there exists a variable $x_i$ and $e \geq 2$ such that $x^e_i$ appears in some monomial of $P$.
We can write $P = Q x^e_i + R$ where the degree of $x_i$ in $R$ is strictly less than $e$ (if any).
Now, pick $I \subseteq [n]$ of size $\size{I} = 3$ such that $i \in I$.
Note that if $Q \restrict{\bar{x}_{[n] \setminus I} = \bar{a}_{[n] \setminus I}} \nequiv 0$
then the corresponding $P'$ will fail the multilinearity test in Line \ref{L5}
By the Schwartz-Zippel Lemma (Lemma \ref{lem:sz}) 
$\Pr_{\; \bar{a} \in \F^n \;}[Q(\bar{a}) = 0] \leq \frac{d}{\size{\F}} < \varepsilon $ 
that $P$ will pass all tests with probability at most $\varepsilon$. \\ \\
\textbf{Case $2$:} $P$ is a multilinear polynomial.
As such, Lemma \ref{lem:3ROP} decides correctly whether or not $P'$ is a ROP.
Consequently, by Theorem \ref{THM:Main2}, $P$ could pass all tests for less than $\varepsilon$ fraction of $\bar{a} \in \F^n$,
which completes the proof.
\end{proof}

Theorem \ref{THM:Main4} follows as a corollary of the lemma by setting $\varepsilon = 1/4$.

\section{Lower Bounds \& Discussion}
\label{sec:lower}

As was mentioned earlier, our structural results require that the underlying field is of polynomial size.
This is common to many structural results for polynomials. In this section we try to complete the picture
by showing some lower bounds and impossibility results.
First, we exhibit lower bounds on the field size in Theorems \ref{THM:Main1}, \ref{THM:Main2} 
and Algorithm \ref{alg:ROT}.
Next, we show that similar structural statements are false over the Boolean domain, 
even for the monotone formulae.
We finish this section with some open questions.

\subsection{Lower Bounds on the Field Size}

Let $\F$ be a field. Consider the following family of multilinear polynomials.

\begin{definition} 
$\set{Q_n}_{n \in \N}: \F^n \to \F$, $Q_n(\xb) \eqdef \prod _{i=1}^n (x_i - 1) + \prod _{i=1}^n x_i$.
\end{definition}

First, observe that for $n \geq 3$ $Q_n$ is not a ROP for any field. We leave the proof as an exercise for the reader.
On the other hand, when $\F = \F_2$, the field of two elements, fixing even a single variable to any field element results in a ROP.
We get the following lemma.

\begin{lemma}
Let $n \geq 4$ and $\F = \F_2$.
Then there exists a multilinear read-many polynomial $Q_n \in \Fn$ such that $Q_n$ is \lROP for each $\bar{a} \in \F_2^n$.
\end{lemma}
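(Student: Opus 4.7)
The plan is to verify the claim by a direct case analysis on the values assigned outside the three free variables. First, I would take as given the preceding sentence that $Q_n$ is read-many for $n \geq 3$ over any field; for $\F = \F_2$ one can double-check this quickly by computing that $\Spar{Q_n}{i}{j} \nequiv 0$ for every pair $i \neq j$, so the gate graph of $Q_n$ is the complete graph $K_n$ and $Q_n$ is not additively separable by Observation \ref{obs:plus}, while a short commutator computation rules out multiplicative separability as well (invoking Lemma \ref{lem:commutator}). Thus it suffices to show that for every three-subset $I \subseteq [n]$ and every $\ab \in \F_2^n$, the restriction $Q_n \restrict{\xb_{[n] \setminus I}\;=\;\ab_{[n] \setminus I}}$ is a ROP.

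The central observation is that over $\F_2$ we have $Q_n(\xb) = \prod_{i=1}^n(x_i+1) + \prod_{i=1}^n x_i$, and each of the two top-level products factors across the partition $I \cup ([n]\setminus I)$. Introduce $T_1 = \{i \notin I : a_i = 1\}$ and $T_0 = \{i \notin I : a_i = 0\}$; since $n \geq 4$, the union $T_0 \cup T_1 = [n] \setminus I$ is non-empty, so at least one of $T_0, T_1$ is non-empty. I would then split into three exhaustive cases. In the case $T_1 = \emptyset$ (so $T_0 \neq \emptyset$), the second top-level product evaluates to $0$ while the first evaluates to $\prod_{i \in I}(x_i+1)$, which is a ROP. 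In the symmetric case $T_0 = \emptyset$ (so $T_1 \neq \emptyset$), the first product vanishes and the restriction becomes the ROP $\prod_{i \in I} x_i$. In the remaining case where both $T_0$ and $T_1$ are non-empty, each top-level product contains a vanishing factor (an $a_i + 1 = 0$ coming from $T_1$ and an $a_i = 0$ coming from $T_0$), so the restriction is identically zero and is trivially a ROP.

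Assembling these three cases shows that $Q_n$ is \lROP for every $\ab \in \F_2^n$, which, combined with $Q_n$ being read-many, completes the proof. No serious obstacle is anticipated; the one subtle point worth flagging is that the hypothesis $n \geq 4$, rather than the weaker $n \geq 3$ under which $Q_n$ is already known to be read-many, is precisely what guarantees $[n] \setminus I \neq \emptyset$, so that the three-case partition is exhaustive — for $n = 3$ the "empty restriction" would return $Q_3$ itself, which is read-many and would falsify the locally-ROP property.
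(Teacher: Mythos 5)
Your proof is correct and follows essentially the same route as the paper: the paper simply observes that over $\F_2$ fixing even a single variable to $0$ or $1$ kills (or collapses) one of the two top-level products, leaving a ROP, and your three-case analysis on $T_0,T_1$ is just the explicit unfolding of that observation for all $n-3$ fixed variables, with the same correct remark that $n\geq 4$ is what makes $[n]\setminus I$ non-empty. The read-many part is likewise treated as the paper does (it is left as an exercise there), so nothing is missing.
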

This implies that field size in Theorem \ref{THM:Main1} should be at least $3$. 
We now move to the proof of a lower bound on the field size in Theorem \ref{THM:Main2} and Algorithm \ref{alg:ROT}.
For this purpose we need the following definition:

\begin{definition}
Let $S \subseteq \F$ and $\ab \in \F^n$. 
We define the size of $\ab$ w.r.t to $S$ as: $\size{\ab}_S \eqdef \size{\condset{i}{a_i \in S}}$. 
\end{definition}
We can now extend the previous result to other fields:

\begin{corollary}
Let $n \geq 4$ and $\F$ be a field.
Then there exists a multilinear read-many polynomial $Q_n \in \Fn$ such that $Q_n$ is \lROP for each $\bar{a} \in \F^n$
with  $\size{\ab}_{\bools{}} \geq 4$.
\end{corollary}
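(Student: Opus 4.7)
The plan is a short pigeonhole argument that exploits the special structure of $Q_n$. The key observation is that if any single coordinate $x_k$ is set to a value in $\bools{}$, one of the two products in $Q_n = \prod_i (x_i-1) + \prod_i x_i$ is annihilated, leaving a single product of variable-disjoint linear factors, which is manifestly a ROP. This is exactly the ``reduction to the $\F_2$ case'' that the preceding lemma already exploits; the only difference now is that $\ab$ lives in a larger field, so I need to locate a Boolean-valued coordinate among the fixed ones.

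Concretely, I would fix $\ab \in \F^n$ with $\size{\ab}_{\bools{}} \geq 4$ and an arbitrary subset $I \subseteq [n]$ of size $3$. Since there are at least $4$ indices $i$ with $a_i \in \bools{}$ while $\size{I} = 3$, by pigeonhole some $k \in [n] \setminus I$ satisfies $a_k \in \bools{}$. A direct calculation then gives
\[
Q_n \restrict{x_k = 0} \;=\; -\prod_{i \neq k}(x_i - 1), \qquad Q_n \restrict{x_k = 1} \;=\; \prod_{i \neq k} x_i,
\]
so $Q_n \restrict{x_k = a_k}$ is a ROP in either case.

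Since ROPs are closed under restriction, further setting $\xb_{[n] \setminus (I \cup \set{k})} = \ab_{[n] \setminus (I \cup \set{k})}$ yields a trivariate ROP, which is exactly the \lROP property for this choice of $\ab$ and $I$. Combined with the fact (recorded in the exercise immediately preceding the lemma) that $Q_n$ is read-many for $n \geq 3$ over every field, this delivers the required witness. I do not anticipate any serious obstacle; the only point worth highlighting is that the pigeonhole step needs $\size{\ab}_{\bools{}} > \size{I} = 3$, which is precisely why the hypothesis is stated as $\size{\ab}_{\bools{}} \geq 4$ and not something weaker.
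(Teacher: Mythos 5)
Your proposal is correct and is exactly the argument the paper intends (the corollary is left implicit there as the natural extension of the preceding $\F_2$ lemma): pigeonhole a Boolean coordinate $a_k$ with $k$ outside the chosen triple $I$, note that $Q_n\restrict{x_k=a_k}$ collapses to a single product of variable-disjoint linear factors and is hence a ROP, and use closure of ROPs under further restrictions. No gaps; the accounting $\size{\ab}_{\bools{}}\geq 4 > \size{I}=3$ is precisely what is needed.
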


\begin{corollary}
Let $n \geq 4$ and $\F$ be a field of size $\size{\F} \leq n/4$.
Then there exists a multilinear read-many polynomial $Q_n \in \Fn$ such that $Q_n$ is \lROP for at least 
$1 - \exp(-n / \size{\F}^2)$ fraction of $\bar{a} \in \F^n$.
\end{corollary}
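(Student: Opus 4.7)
The plan is to reuse the polynomial $Q_n$ from the preceding corollary and reduce the statement to a purely probabilistic tail bound. That corollary already gives us a multilinear read-many polynomial $Q_n$ which is \lROP for every $\ab \in \F^n$ satisfying $\size{\ab}_{\bools{}} \geq 4$; no further structural work on $Q_n$ is needed. So the task reduces to showing
\[
\Pr_{\ab \in \F^n}\!\left[\size{\ab}_{\bools{}} \leq 3\right] \;\leq\; \exp(-n/\size{\F}^2).
\]

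For a uniformly random $\ab \in \F^n$, the random variable $X \eqdef \size{\ab}_{\bools{}}$ is binomial with $n$ trials and success probability $p = 2/\size{\F}$, so $\mu \eqdef \E[X] = 2n/\size{\F}$. The hypothesis $\size{\F} \leq n/4$ forces $\mu \geq 8$; in particular $3 \leq \tfrac{3}{8}\mu$, so $\mu - 3 \geq \tfrac{5}{8}\mu = \tfrac{5n}{4\,\size{\F}}$. The threshold $3$ therefore sits well below the mean, and a standard Hoeffding bound applied to the sum of the $n$ independent $\set{0,1}$-valued indicators (with deviation $s = \mu - 3$) will give
\[
\Pr\!\left[X \leq 3\right] \;\leq\; \exp\!\left(-\tfrac{2 s^2}{n}\right) \;\leq\; \exp\!\left(-\tfrac{25\, n}{8\,\size{\F}^2}\right) \;\leq\; \exp\!\left(-n/\size{\F}^2\right),
\]
which is exactly the claimed bound.

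The only step that requires care is tracking constants in the tail bound so that the exponent absorbs into $-n/\size{\F}^2$; this is what the hypothesis $\size{\F} \leq n/4$ is used for, since it is precisely the range in which $\mu$ is large enough to make $3$ a genuinely deep lower tail. I do not expect any serious obstacle: the structural content lives entirely in the previous corollary, and the new step is a one-shot concentration inequality.
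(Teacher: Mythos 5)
Your proposal is correct and follows essentially the same route as the paper: reuse $Q_n$ and the preceding corollary, then bound $\Pr_{\ab \in \F^n}[\size{\ab}_{\bools{}} < 4]$ by a Chernoff/Hoeffding tail estimate for the binomial variable with success probability $2/\size{\F}$, using $\size{\F} \leq n/4$ to place the threshold well below the mean. The only difference is cosmetic: you parameterize the deviation absolutely ($s=\mu-3$) while the paper uses the relative deviation $\delta = 1/\size{\F}$, and both yield the stated bound $\exp(-n/\size{\F}^2)$.
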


\begin{proof}
Let $ \frac{2}{\size{\F}} > \delta > 0 $.
By the Chernoff bound: $\Pr_{\ab \in \F^n} \left[ \frac{1}{n} \size{\ab}_{\bools{}} < \frac{2}{\size{\F}} - \delta \right] \leq \exp(- \delta^2 n)$.
Thus,   $\Pr_{\ab \in \F^n} \left[ \size{\ab}_{\bools{}} < 4 \right] \leq $
$\Pr_{\ab \in \F^n} \left[ \frac{1}{n} \size{\ab}_{\bools{}} < \frac{1}{\size{\F}} \right] \leq \exp(-n / \size{\F}^2)$.
\end{proof}

We can now give the lower bound.

\begin{corollary}
Let $n \geq 4,\varepsilon > 0$ and $\F$ be a field
such that every multilinear polynomial $P \in \Fn$ over $\F$ 
is a read-once polynomial if and only if $P$ is \lROP for at least $\varepsilon$ fraction of $\bar{a} \in \F^n$.
Then $\size{\F} = \Omega \left( \min \left(n, \sqrt{\frac{n}{\varepsilon}} \right) \right)$.
\end{corollary}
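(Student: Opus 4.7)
The plan is to contradict the characterization by exhibiting the family $Q_n$ from the immediately preceding corollary, which is a read-many polynomial whose fraction of ``\lROP'' assignments grows rapidly as $|\F|$ shrinks. Concretely, I will assume the stated two-sided characterization holds over $\F$ and split on whether $|\F| > n/4$ or not. In the first sub-case, $|\F| = \Omega(n) \geq \Omega(\min(n, \sqrt{n/\varepsilon}))$, so nothing needs to be done. In the second sub-case, $|\F| \leq n/4$ puts us inside the regime of the previous corollary, and there is a nontrivial quantitative obstruction to invoke.

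In that regime, the previous corollary yields $Q_n \in \Fn$ which is multilinear and read-many (for $n \geq 3$) and yet is \lROP for at least $1 - \exp(-n/|\F|^2)$ fraction of $\bar{a} \in \F^n$. The assumed characterization demands that every non-ROP be \lROP on strictly less than an $\varepsilon$ fraction of $\bar{a}$, so $Q_n$ forces
\[
1 - \exp(-n / |\F|^2) < \varepsilon,
\]
equivalently $n / |\F|^2 < \ln(1/(1-\varepsilon))$. At this point I would use the elementary bound $\ln(1/(1-\varepsilon)) \leq 2\varepsilon$ (valid for $\varepsilon$ bounded away from $1$, e.g.\ $\varepsilon \leq 1/2$; for larger $\varepsilon$ the target quantity $\sqrt{n/\varepsilon}$ is $\Theta(\sqrt{n})$ and the bound follows from the same inequality with a worse constant) to conclude $|\F|^2 > n/(2\varepsilon)$, hence $|\F| = \Omega(\sqrt{n/\varepsilon})$.

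Combining the two sub-cases gives $|\F| = \Omega\bigl(\min(n,\sqrt{n/\varepsilon})\bigr)$. The bulk of the work has already been done in building $Q_n$ and computing its fraction of \lROP assignments, so essentially every nontrivial step here is a one-line deduction from the preceding corollary. The only place where I anticipate any care is handling the regime $\varepsilon$ close to $1$: the logarithmic inversion $|\F|^2 > n/\ln(1/(1-\varepsilon))$ degrades as $\varepsilon \to 1$, and one must verify that the target bound $\min(n, \sqrt{n/\varepsilon})$ is still matched up to absolute constants (which it is, since for $\varepsilon$ bounded above by any fixed constant less than $1$, $\ln(1/(1-\varepsilon)) = \Theta(\varepsilon)$ and the constants absorb into the $\Omega$).
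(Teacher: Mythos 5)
Your argument is exactly the paper's proof: the paper also assumes $\size{\F} \leq n/4$ (the complementary case giving the $n$ term of the minimum), plugs $Q_n$ into the assumed characterization to get $1-\exp(-n/\size{\F}^2) < \varepsilon$, and inverts the logarithm to conclude $\size{\F}^2 = \Omega(n/(-\ln(1-\varepsilon))) = \Omega(n/\varepsilon)$. Your explicit remark that the last step needs $\varepsilon$ bounded away from $1$ (so that $\ln(1/(1-\varepsilon)) = \Theta(\varepsilon)$) is a point the paper glosses over, but otherwise the two proofs coincide.
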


\begin{proof}
Suppose that $\size{\F} \leq n/4$.
Set $P=Q_n$. By the previous corollary: $\varepsilon > 1 - \exp(-n / \size{\F}^2)$, implying that
$\size{\F}^2 \geq \Omega(\frac{n}{-\ln(1-\varepsilon)}) = \Omega(\frac{n}{\varepsilon})$. 
Consequently, $\size{\F} = \Omega(\sqrt{\frac{n}{\varepsilon}})$.
\end{proof}

Moving to algorithmics, specifically considering Algorithm \ref{alg:ROT}, a standard way to reduce
the failure probability is by repeating the algorithm several times. 
Another corollary of the above analysis is that if the underlying field
is of size $\size{\F} = \BigO(n^{1/2 - \delta})$ for some $\delta > 0$ then the success probability of Algorithm \ref{alg:ROT} 
is exponentially small $\exp(-n^{2\delta})$. As a result, to reduce the failure probability below $\varepsilon$ one would need to
repeat the algorithm at least $\exp(n^{2\delta}) \cdot \ln(1/\varepsilon)$ times.

\subsection{Impossibility Results for Boolean Functions}

In \cite{Gurvich77,KLNSW93}, a characterization of functions computed by Boolean read-once formulae was given.
Those functions were referred to as ``read-once functions''.
The characterization was given in terms of minterms and maxterms of the Boolean functions in question.
The first step in this characterization was considering only monotone functions
\footnote{A Boolean function $f(\xb)$ is \emph{monotone} if for every $\xb \geq \yb \in \bools{n}$ it holds that $f(\xb) \geq f(\yb)$.}.
We show that statements similar to the ones proved in this paper
(i.e. ``global structure'' iff ``local structure'') are false over the Boolean domain, 
even if we restrict ourselves to the monotone functions. 
To this end, we define two families of Boolean functions.

\begin{definition}
$\space$ \\ 
$\set{f_n}_{n \in \N}: \bools{n} \to \bools{}$, 
$f_n(\xb) \eqdef x_1 \wedge x_2 \wedge \cdots \wedge x_n \bigvee \xb_1 \wedge \xb_2 \wedge \cdots \wedge \xb_n$. \\
$\set{g_n}_{n \in \N}: \bools{n+1} \to \bools{}$, 
$g_n(\xb,y) \eqdef y \wedge (x_1 \vee x_2 \vee \cdots \vee x_n) \bigvee x_1 \wedge x_2 \wedge \cdots \wedge x_n$.
\end{definition}

Observe that $\set{f_n}$ resembles $\set{Q_n}$ from the previous section
and in fact can be thought of a Boolean version of $Q_n$.
We get the following lemma:

\begin{lemma}
Let $n \geq 3$.
Then there exists a Boolean read-many function $f_n(x_1, \ldots, x_n)$ such that 
fixing any variable to either $0$ or $1$ results in a read-once function.
\end{lemma}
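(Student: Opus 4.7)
The plan is to verify the two required properties of $f_n$. For the easy direction, direct substitution gives $f_n|_{x_i = 1} = \bigwedge_{j \neq i} x_j$ and $f_n|_{x_i = 0} = \bigwedge_{j \neq i} \bar{x}_j$, both of which are manifestly read-once formulas.

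For the harder direction, I show $f_n$ itself is not read-once when $n \geq 3$. Suppose towards a contradiction that it were. Since $f_n$ depends on all $n$ variables, any read-once formula for it must have a top gate ($\wedge$ or $\vee$) combining two read-once sub-formulas $g$ and $h$ on disjoint, nonempty variable sets $S$ and $T$ partitioning $[n]$ (the Boolean analogue of Lemma \ref{ROF_Representation_Lemma}).

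In the $\vee$ case, $f_n(\mathbf{1}_{[n]}) = g(\mathbf{1}_S) \vee h(\mathbf{1}_T) = 1$, so without loss of generality $g(\mathbf{1}_S) = 1$ (the other case is symmetric). Pick any $j \in T$, which exists since $T$ is nonempty, and consider the assignment obtained from $\mathbf{1}_{[n]}$ by flipping $x_j$ to $0$. The value of $g$ is unchanged since $g$ is independent of $x_j$, so this assignment still satisfies $f_n = g \vee h = 1$. However, this assignment has $n - 1 \geq 2$ ones and one zero, hence is neither all-ones nor all-zeros, and so $f_n$ actually evaluates to $0$ on it, a contradiction.

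In the $\wedge$ case, pick any $i \in S$. From $f_n|_{x_i = 1} = \bigwedge_{j \neq i} x_j$ we have $g|_{x_i = 1}(x_{S \setminus \{i\}}) \wedge h(x_T) = \bigwedge_{j \neq i} x_j$, and evaluating at $x_{S \setminus \{i\}} = \mathbf{1}$ yields $g|_{x_i = 1}(\mathbf{1}) \wedge h(x_T) = \bigwedge_{j \in T} x_j$. Since the right-hand side is not identically $0$, we must have $g|_{x_i = 1}(\mathbf{1}) = 1$, which then forces $h(x_T) = \bigwedge_{j \in T} x_j$. The symmetric argument using $x_i = 0$ and $x_{S \setminus \{i\}} = \mathbf{0}$ forces $h(x_T) = \bigwedge_{j \in T} \bar{x}_j$. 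Since $T \neq \emptyset$, these two forms for $h$ disagree at $x_T = \mathbf{1}_T$, yielding the desired contradiction. The main work throughout is bookkeeping in the case analysis; no characterization theorem for read-once Boolean functions is invoked.
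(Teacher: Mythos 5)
Your proof is correct. Note that the paper itself supplies no argument for this lemma at all -- it is stated without proof, in the same ``exercise for the reader'' spirit as the earlier observation that $Q_n$ is not a ROP -- so there is no official proof to compare against; your write-up simply fills in what the author treats as immediate. Both halves check out: the restrictions $f_n|_{x_i=1}=\bigwedge_{j\neq i}x_j$ and $f_n|_{x_i=0}=\bigwedge_{j\neq i}\bar{x}_j$ are indeed read-once, and your top-gate case analysis (the $\vee$ case killed by flipping one coordinate of the all-ones point, the $\wedge$ case killed by forcing $h$ to be simultaneously $\bigwedge_{j\in T}x_j$ and $\bigwedge_{j\in T}\bar{x}_j$) is a clean, elementary way to see that $f_n$ is read-many, avoiding the minterm/maxterm characterization of \cite{Gurvich77,KLNSW93}. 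The only step you cite without proof is the Boolean analogue of Lemma \ref{ROF_Representation_Lemma} (top gate splits the variables into two disjoint nonempty blocks); this is standard and exactly parallel to how the paper treats its arithmetic counterpart, though a one-line remark that constant subformulae can be simplified away (since $f_n$ depends on all $n$ variables) would make it airtight.
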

We now show a similar statement for monotone functions.

\begin{lemma}
Let $n \geq 2$.
Then there exists a monotone, read-many function $g_n(x_1, \ldots, x_n,y)$ such that 
fixing any variable to either $0$ or $1$ results in a monotone read-once function.
\end{lemma}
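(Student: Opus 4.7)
The plan is to establish the three required properties in order: $g_n$ is monotone, every single-variable restriction of $g_n$ is monotone and read-once, and $g_n$ itself is read-many.

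Monotonicity is immediate: the defining formula uses only $\wedge$ and $\vee$ applied to positive literals. For the restrictions, I will simplify each one directly. Setting $y=0$ kills the first disjunct and leaves $x_1\wedge\cdots\wedge x_n$. Setting $y=1$ turns the first disjunct into $x_1\vee\cdots\vee x_n$, which absorbs the conjunction $x_1\wedge\cdots\wedge x_n$, so we are left with $x_1\vee\cdots\vee x_n$. Setting $x_i=0$ forces the full conjunction to $0$, leaving $y\wedge\bigvee_{j\ne i}x_j$. Setting $x_i=1$ forces $x_1\vee\cdots\vee x_n$ to $1$, so the first disjunct becomes $y$, and we are left with $y\vee\bigwedge_{j\ne i}x_j$. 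Each result is patently a monotone read-once formula.

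To show $g_n$ itself is read-many, I will use the minterm/maxterm structure together with a top-gate dichotomy. A direct check establishes that for each $i$ the set $\{y,x_i\}$ is simultaneously a minterm of $g_n$ (setting $y=x_i=1$ and the rest to $0$ yields $g_n=1$, while dropping either to $0$ yields $g_n=0$) and a maxterm (setting $y=x_i=0$ and the rest to $1$ yields $g_n=0$, while flipping either to $1$ yields $g_n=1$). Now suppose for contradiction that some monotone read-once formula $\psi$ computes $g_n$. Since $g_n$ depends on all $n+1\ge 3$ variables, $\psi$ is not a single literal, so its root gate is either $\vee$ or $\wedge$. In the $\vee$ case, $\psi=\psi_1\vee\psi_2$ on disjoint nonempty variable sets, and every minterm of $g_n$ must lie entirely inside one of the two parts; the minterms $\{y,x_1\},\ldots,\{y,x_n\}$ pairwise share $y$, so all $n+1$ variables end up on the same side, contradicting disjointness. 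The $\wedge$ case is dual, using the maxterms $\{y,x_i\}$, which also pairwise share $y$.

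The main obstacle is the minterm/maxterm bookkeeping for $g_n$: verifying that each $\{y,x_i\}$ is simultaneously both a minterm and a maxterm requires a careful case analysis on the surrounding values. Once this is done, the structural dichotomy on the root gate of $\psi$ closes the argument quickly.
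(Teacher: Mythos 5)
Your proof is correct. It is worth noting that the paper itself gives no real argument here: its proof is just ``by a simple case analysis,'' which presumably covers the restriction checks you carry out explicitly (setting $y$ or some $x_i$ to $0$ or $1$ and simplifying), while the read-many-ness of $g_n$ is left implicit. Your contribution beyond that is the self-contained argument that $g_n$ is read-many: you verify that each pair $\{y,x_i\}$ is simultaneously a minterm and a maxterm and then rule out both possible root gates of a hypothetical monotone ROF, using the fact that in a disjoint-variable $\vee$-decomposition every minterm lives on one side (dually for $\wedge$ and maxterms). This is essentially a hands-on instance of the Gurvich/Karchmer--Linial--Newman--Saks--Wigderson criterion (a monotone function is read-once iff every minterm and maxterm meet in exactly one variable, and here $\{y,x_i\}$ meets itself in two), so it both fills the gap the paper glosses over and works uniformly for all $n\geq 2$ (for $n=2$ the function is just majority of three). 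One small point to tidy: when you write $\psi=\psi_1\vee\psi_2$ ``on disjoint nonempty variable sets,'' a child of the root could a priori be a constant leaf; you should either normalize the ROF by eliminating constant gates or invoke the Boolean analogue of the structural lemma for read-once formulae (as in Lemma \ref{ROF_Representation_Lemma}) to justify that both sides depend on at least one variable. With that remark the argument is complete.
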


\begin{proof}
By a simple case analysis.  
\end{proof}
The above preclude any ``global structure'' iff ``local structure'' result in the Boolean domain
for any locality (not just three) even in the monotone setting.

\subsection{Open Questions}

We conclude with some open questions.
First of all, the previous sections exhibit some lower bounds of required field size.
It would be nice to get the right bound and see what is the behavior of such functions just below that bound.  

The other natural question is whether it is possible to get a characterization for functions computed by 
other interesting classes of functions, both Boolean and arithmetic?
Such as: read-twice formulae (or read-$k$ for $k\geq 2$) or even sum of two read-once formulae, bounded-depth formulae, etc. 
The same can be ask w.r.t property testers.

In \cite{ShpilkaVolkovich09}, it was shown that a sum of $k$ read-once polynomial $P_1 + \ldots + P_k$ is 
uniquely defined by its $\BigO(k)$-variate restrictions to a typical assignment.
The result was recently generalized in \cite{AvMV14} showing that a polynomial computed by multilinear read-$k$ is uniquely defined 
by its $k^{\BigO(k)}$-variate restrictions to a typical assignment.
In \cite{ShpilkaVolkovich14}, it was shown how to efficiently reconstruct a (single) read-once formula   
given the set of its three-variate restrictions to a typical assignment. 
However, for $k \geq 2$ the question of efficient reconstruction of multilinear read-$k$ formulae
remains open, even for special case of when the formula is a sum of read-once formula.

Giving a characterization can be viewed as an intermediate task.
So, we finish with a conjecture which can be seen as an extension of Theorem \ref{THM:Main2}:
``There exists a function $\ell oc(k) : \N \to \N$ such that a polynomial $P$ is computable by a multilinear read-$k$ formula
iff the same holds true for each of its restriction of size $\ell oc(k)$ to a typical assignment''.



\newcommand{\etalchar}[1]{$^{#1}$}

\end{document}